\documentclass[copyright]{article}
 

\usepackage[margin=1in]{geometry}


\usepackage{geometry} 

\usepackage{graphicx} 

\renewcommand{\Omega}{{\bf \emptyset}}

\usepackage[style=exampledefault,roundcorner=20,linecolor=black,middlelinewidth=2pt,leftmargin=1cm,rightmargin=1cm]{mdframed}

\usepackage{stmaryrd}
\usepackage{extarrows}
\usepackage{mathtools}
\usepackage{float}
\usepackage{hyperref}
\usepackage{fixltx2e}
\usepackage{multicol}
\usepackage{amsfonts}
\usepackage{amsmath}
\usepackage{amsthm}
\usepackage{enumitem}
\usepackage{tikz}
\usepackage{float}
\usepackage[all]{xy}
\usepackage{color,soul}
\usepackage{afterpage}
\usepackage{longtable}
\usepackage{sansmath} 


\newdir{ >}{{}*!/-8pt/@{>}}

\theoremstyle{definition}
\newtheorem{theorem}{Theorem}[section]

\newtheorem{definition}[theorem]{Definition}
\newtheorem{lemma}[theorem]{Lemma}
\newtheorem{proposition}[theorem]{Proposition}

\renewcommand{\bar}[1]{\overline{#1}\hspace*{.02cm}}

\newcommand{\TOF}{\mathsf{TOF}}
\newcommand{\CNOT}{\mathsf{CNOT}}
\newcommand{\cnot}{\mathsf{cnot}}
\newcommand{\tof}{\mathsf{tof}}
\newcommand{\notgate}{\mathsf{not}}

\newcommand{\N}{\mathbb{N}}

\newcommand{\C}{\mathbb{C}}
\newcommand{\Z}{\mathbb{Z}}

\newcommand{\FdHilb}{\FHilb}
\newcommand{\FHilb}{\mathsf{FHilb}}
\newcommand{\Mat}{\mathsf{Mat}}
\newcommand{\Not}{\mathsf{not}}

\newcommand{\ZX}{\mathsf{ZX}}

\newcommand{\U}{\mathsf{U}}

\newcommand{\<}{\langle}

\renewcommand{\>}{\rangle}

\newcommand{\onein}{|1\>}
\newcommand{\oneout}{\<1|}

\newcommand{\zeroin}{|0\>}
\newcommand{\zeroout}{\<0|}

\usepackage{relsize}

\let\oldcirc\circ

\renewcommand{\circ}{{\mathrel{\mathsmaller{\mathsmaller{\oldcirc}}}}}

\newcommand{\xpi}{
\begin{tikzpicture}
	\begin{pgfonlayer}{nodelayer}
		\node [style=xmul] (0) at (4, -0) {$\pi$};
		\node [style=rn] (1) at (3, -0) {};
		\node [style=rn] (2) at (5, -0) {};
	\end{pgfonlayer}
	\begin{pgfonlayer}{edgelayer}
		\draw [style=simple] (2) to (0);
		\draw [style=simple] (0) to (1);
	\end{pgfonlayer}
\end{tikzpicture}
}

\newcommand{\zpi}{
\begin{tikzpicture}
	\begin{pgfonlayer}{nodelayer}
		\node [style=zmul] (0) at (4, -0) {$\pi$};
		\node [style=rn] (1) at (3, -0) {};
		\node [style=rn] (2) at (5, -0) {};
	\end{pgfonlayer}
	\begin{pgfonlayer}{edgelayer}
		\draw [style=simple] (2) to (0);
		\draw [style=simple] (0) to (1);
	\end{pgfonlayer}
\end{tikzpicture}
}

\newcommand{\zmul}{
\begin{tikzpicture}
	\begin{pgfonlayer}{nodelayer}
		\node [style=zmul] (0) at (0, -0) {};
		\node [style=rn] (1) at (0.75, -0) {};
		\node [style=rn] (2) at (-0.75, 0.25) {};
		\node [style=rn] (3) at (-0.75, -0.25) {};
	\end{pgfonlayer}
	\begin{pgfonlayer}{edgelayer}
		\draw [style=simple] (0) to (1);
		\draw [style=simple, in=-165, out=0, looseness=1.00] (3) to (0);
		\draw [style=simple, in=0, out=165, looseness=1.00] (0) to (2);
	\end{pgfonlayer}
\end{tikzpicture}
}

\newcommand{\zcomulu}{
\begin{tikzpicture}[xscale=-1]
	\begin{pgfonlayer}{nodelayer}
		\node [style=rn] (0) at (0, -0) {};
		\node [style=zmul] (1) at (-0.75, -0) {};
	\end{pgfonlayer}
	\begin{pgfonlayer}{edgelayer}
		\draw [style=simple] (0) to (1);
	\end{pgfonlayer}
\end{tikzpicture}
}

\newcommand{\zcomul}{
\begin{tikzpicture}[xscale=-1]
	\begin{pgfonlayer}{nodelayer}
		\node [style=zmul] (0) at (0, -0) {};
		\node [style=rn] (1) at (0.75, -0) {};
		\node [style=rn] (2) at (-0.75, 0.25) {};
		\node [style=rn] (3) at (-0.75, -0.25) {};
	\end{pgfonlayer}
	\begin{pgfonlayer}{edgelayer}
		\draw [style=simple] (0) to (1);
		\draw [style=simple, in=-165, out=0, looseness=1.00] (3) to (0);
		\draw [style=simple, in=0, out=165, looseness=1.00] (0) to (2);
	\end{pgfonlayer}
\end{tikzpicture}
}

\newcommand{\zmulu}{
\begin{tikzpicture}
	\begin{pgfonlayer}{nodelayer}
		\node [style=rn] (0) at (0, -0) {};
		\node [style=zmul] (1) at (-0.75, -0) {};
	\end{pgfonlayer}
	\begin{pgfonlayer}{edgelayer}
		\draw [style=simple] (0) to (1);
	\end{pgfonlayer}
\end{tikzpicture}
}

\newcommand{\hadamard}{
\begin{tikzpicture}
	\begin{pgfonlayer}{nodelayer}
		\node [style=rn] (0) at (1, 2.5) {};
		\node [style=rn] (1) at (-1, 2.5) {};
		\node [style=h] (2) at (0, 2.5) {};
	\end{pgfonlayer}
	\begin{pgfonlayer}{edgelayer}
		\draw [style=simple] (0) to (2);
		\draw [style=simple] (2) to (1);
	\end{pgfonlayer}
\end{tikzpicture}
}

\tikzstyle{strings}=[baseline={([yshift=-.5ex]current bounding box.center)}]


\usetikzlibrary{decorations, decorations.text,backgrounds}
\usepackage{xcolor}
\usepackage{graphicx,amsmath,latexsym,amssymb,amsthm,geometry}
\usepackage{pagecolor}

\tikzset{every picture/.append style={scale=.4}, transform shape, strings}

\tikzset{every picture/.append style={background rectangle/.style={fill=black!5}, show background rectangle}}

\renewcommand{\epsilon}{\varepsilon}
\renewcommand{\phi}{\varphi}

\tikzset{%
symbol/.style={%
draw=none,
every to/.append style={%
edge node={node [sloped, allow upside down, auto=false]{$#1$}}}
}
}

\usetikzlibrary{shapes.geometric}
\usetikzlibrary{patterns}
\usetikzlibrary{fit}
\usetikzlibrary{positioning}
\usetikzlibrary{calc}
\usetikzlibrary{arrows}
\usetikzlibrary{decorations.markings}
\usetikzlibrary{decorations.pathreplacing}
\usetikzlibrary{shapes}

\pgfdeclarelayer{background}
\pgfdeclarelayer{nodelayer}
\pgfdeclarelayer{edgelayer}
\pgfsetlayers{background,edgelayer,nodelayer,main}

\tikzset{simple/.style={}}
\tikzset{nothing/.style={outer sep=-3.4pt}}
\tikzset{map/.style={draw,color=black,fill=white, rectangle}}

\tikzset{dot/.style={thick, fill=black, circle, scale=1.3, inner sep = .05cm}}

\tikzset{oplus/.style={draw, scale=0.9,minimum height=.1cm,circle,append after command={
[shorten >=\pgflinewidth, shorten <=\pgflinewidth,]
(\tikzlastnode.north) edge (\tikzlastnode.south)
(\tikzlastnode.east) edge (\tikzlastnode.west)
}
}
}

\tikzset{tri/.style={
draw,
shape border rotate=30,
regular polygon,
regular polygon sides=3,
fill=white,
inner sep = .04cm,
scale=3
}
}


\tikzset{triangle/.style={
draw,
shape border rotate=30,
regular polygon,
regular polygon sides=3,
fill=white,
inner sep = .04cm,
scale=3
}
}

\tikzset{triflip/.style={
draw,
shape border rotate=-30,
regular polygon,
regular polygon sides=3,
fill=white,
inner sep = .04cm,
scale=3
}
}

\tikzset{fanin/.style={
draw,
shape border rotate=30,
regular polygon,
regular polygon sides=3,
fill=white,
inner sep = .1cm
}
}

\tikzset{fanout/.style={
draw,
shape border rotate=-30,
regular polygon,
regular polygon sides=3,
fill=white,
inner sep = .1cm
}
}

\tikzset{onein/.style={
draw,
shape border rotate=30,
regular polygon,
regular polygon sides=3,
fill=black,
inner sep = .04cm,
scale=1.2
}
}

\tikzset{oneout/.style={
draw,
shape border rotate=-30,
regular polygon,
regular polygon sides=3,
fill=black,
inner sep = .04cm,
scale=1.2
}
}

\tikzset{zeroin/.style={
draw,
shape border rotate=30,
regular polygon,
regular polygon sides=3,
fill=white,
inner sep = .04cm,
scale=1.2
}
}

\tikzset{zeroout/.style={
draw,
shape border rotate=-30,
regular polygon,
regular polygon sides=3,
fill=white,
inner sep = .04cm,
scale=1.2
}
}

\tikzset{xmul/.style={draw,fill=white, circle,scale=1.5}}

\tikzset{zmul/.style={draw,fill={rgb:black,1;white,3}, text=black, circle,scale=1.5,thick}}

\tikzset{h/.style={draw,fill=black,  regular polygon, regular polygon sides=4,scale=.8}}


%
%


\tikzset{wires/.style={}}

\tikzset{box/.style={inner sep=0pt, thick, draw=black, text height=1.5ex, text depth=.25ex, text centered, minimum height=3em, anchor=center}}


\makeatletter
\newcommand{\oneraggedpage}{\let\mytextbottom\@textbottom
	\let\mytexttop\@texttop
	\raggedbottom
	\afterpage{%
		\global\let\@textbottom\mytextbottom
		\global\let\@texttop\mytexttop}}

\renewcommand{\phi}{\varphi}

\tikzset{rn/.style={outer sep=-3pt}}

\tikzstyle{uptriangle}=[regular polygon, regular polygon sides=3, scale=1, draw]
\tikzstyle{downtriangle}=[regular polygon, regular polygon sides=3, shape border rotate=180, scale=1, draw]
\tikzstyle{none}=[inner sep=-1pt]


\makeatletter
\def\namedlabel#1#2{\begingroup
   \def\@currentlabel{#2}%
   \label{#1}\endgroup
}
\makeatother

\title{Circuit Relations for Real Stabilizers: \\Towards $\TOF+H$}
\author{
	Comfort, Cole
}

\providecommand{\customgenericname}{}
\newcommand{\newcustomtheorem}[2]{%
  \newenvironment{#1}[1]
  {%
   \renewcommand\customgenericname{#2}%
   \renewcommand\theinnercustomgeneric{##1}%
   \innercustomgeneric
  }
  {\endinnercustomgeneric}
}

\newcustomtheorem{customproposition}{Proposition}
\newcustomtheorem{customtheorem}{Theorem}
\newcustomtheorem{customlemma}{Lemma}

\begin{document}
\maketitle

\allowdisplaybreaks

\begin{abstract}
The real stabilizer fragment of quantum mechanics was shown to have a complete axiomatization in terms of the angle-free fragment of the ZX-calculus. This fragment of the ZX-calculus---although abstractly elegant---is stated in terms of identities, such as spider fusion which generally do not have interpretations as circuit transformations.

We complete the category $\CNOT$ generated by the controlled not gate and the computational ancillary bits, presented by circuit relations, to the real stabilizer fragment of quantum mechanics.  This is performed first, by adding the Hadamard gate and the scalar $\sqrt{2}$ as generators.  We then construct translations to and from the angle-free fragment of the ZX-calculus, showing that they are inverses. 

 We then discuss how this could potentially lead to a complete axiomatization, in terms of circuit relations, for the approximately universal fragment of quantum mechanics generated by the Toffoli gate, Hadamard gate and computational ancillary bits.

\end{abstract}

\section{Background}

The angle-free fragment of the $\ZX$ calculus---describing the interaction of the $Z$ and $X$ observables, Hadamard gate and $\pi$ phases---is known to be complete for (pure) real stabilizer circuits (stabilizer circuits with real coefficients) \cite{realzx}.
In \cite[Section 4.1]{realzx}, it is shown that real stabilizer circuits are generated by the controlled-$Z$ gate, the $Z$ gate, the Hadamard gate, $|0\>$ state preparations and $\<0|$ post-selected measurements.  Therefore, real stabilizer circuits can also be generated by the controlled-not gate, Hadamard gate, $|1\>$ state preparations and $\<1|$ post-selected measurements.
Although the Hadamard gate, controlled-not gate and computational ancillary bits are derivable in the angle-free fragment of the ZX-calculus, the identities are not given in terms of circuit relations involving these gates: and instead, on identities such as spider fusion which do not preserve the causal structure of being a circuit. Therefore circuit simplification usually involves a circuit reconstruction step at the end.

We provide a complete set of {\em circuit identities} for the category generated by the controlled-not gate, Hadamard gate and state preparation for $|1\>$ and postselected measurement for $\<1|$ and the scalar $\sqrt 2$. The completeness is proven by performing a translation to and from the angle-free fragment of the ZX-calculus; however, in contrast to the ZX-calculus, we structure the identities so that they preserve the causal structure of circuits.  Although the axiomatization  we describe is just as computationally expressive as the angle-free fragment of the ZX-calculus, it provides a high-level language for real stabilizer circuits; and hopefuly, alongside the circuit axiomatization for Toffoli circuits, will lead to a high-level language for Hadamard+Toffoli circuits.


The compilation of quantum programming languages can involve multiple intermediate steps before an optimized physical-level circuit is produced; where optimization can be performed at various levels of granularity \cite{maslov2017basic}.
At the coarsest level of granularity, classical oracles and subroutines are synthesized using generalized controlled-not gates.
Next these generalized controlled-not gates are decomposed into cascading Toffoli gates, and so on...
Eventually, these gates are decomposed into fault tolerant 1 and 2-qubit gates.

The ZX-calculus has  proven to be successful for this fine grain optimization, in particular at reducing $T$ counts \cite{190310477}. This optimization is performed in three steps: first, a translation must be performed turning  the circuit into spiders and phases.  Second, the spider fusion laws, Hopf laws, bialgebra laws and so on are applied to reduce the number of nodes/phases; transforming the circuit into a simpler form resembling an undirected, labeled graph without a global causal structure.  Finally, an optimized circuit is re-extracted from this undirected graph.
In  order to extract circuits at the end, for example, \cite{zxsimp,duncan2010rewriting} use a property of graphs called gFlow.

Using only circuit relations, in contrast,  \cite{Fagan}  were able to reduce 2-qubit Clifford circuits to minimal forms in quantomatic.

Toffoli+Hadamard  quantum circuits, as opposed to the ZX-calculus, are more suitably a language for classical oracles, and thus, are appropriate for coarse granularity optimization.  The controlled-not+Hadamard  subfragment, on the other hand, which we discuss in this paper one can only produce oracles for affine Boolean functions---which is obviously very computationally weak.
The eventual goal, however, is to use this complete axiomatization controlled-not+Hadamard circuits given in this paper, and the axiomatization of Toffoli circuits provided in \cite{tof}, to provide a complete set of identities for the approximately universal \cite{tofh} fragment Toffoli+Hadamard circuits.  In this fragment, indeed, all oracles for classical  Boolean functions can be constructed \cite{tof, aaronson}.  
In Section \ref{section:tof}, we discuss how this circuit axiomatization of controlled-not+Hadamard circuits could potentially lead to one for Toffoli+Hadamard circuits,

Toffoli+Hadamard circuits also easily accommodate the notion of quantum control.
This is useful for implementing circuits corresponding to the conditional execution of various subroutines; which is discussed in \cite[Section 2.4.3]{gilesprogramming} and \cite{hanersoftware}.
Although, in the fragment which we discuss in this paper, we can not control all unitaries: namely circuits containing controlled-not gates can not be controlled. Again, the eventual goal is to extend the axiomatizations of cnot+Hadamard and Toffoli  circuits to Toffoli+Hadamard circuits, where there is no such limitation.

In the ZX-calculus, by contrast, this notion of control is highly unnatural. One would likely have to appeal to the triangle gate, as discussed in \cite{ngcompleteness,triangle,vilmart}.

\section{The controlled not gate}
\label{sec:cnot}

Recall that $\CNOT$ is the PROP generated by the 1 ancillary bits $\onein$ and $\oneout$ as well as the controlled not gate:   
  \[ 
   \onein := 

$$

There is the following completeness result:
\begin{theorem}
$\CNOT$ is discrete-inverse equivalent to the category of affine partial isomorphisms between finite-dimensional $\Z_2$ vector spaces, and thus, is complete.
\end{theorem}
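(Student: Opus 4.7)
The plan is to prove completeness by constructing mutually inverse functors between $\CNOT$ and the PROP $\mathsf{AffPIsom}(\Z_2)$ of affine partial isomorphisms between finite-dimensional $\Z_2$-vector spaces. An affine partial isomorphism $n \rightharpoonup m$ is specified by an affine subspace $\Gamma \subseteq \Z_2^n \oplus \Z_2^m$ which is the graph of a partial bijection from an affine subspace of $\Z_2^n$ to an affine subspace of $\Z_2^m$.

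First, I would define the semantic functor $F:\CNOT \to \mathsf{AffPIsom}(\Z_2)$ on generators by sending $\cnot$ to the linear map $(a,b)\mapsto(a,a+b)$, sending $\onein$ to the affine map $\star \mapsto 1$, and sending $\oneout$ to the partial map defined only on $1 \in \Z_2$ with value $\star$. Soundness amounts to checking each of [CNOT.1]--[CNOT.9] under this interpretation. These are all routine linear-algebraic verifications over $\Z_2$: the relations [CNOT.1]--[CNOT.3] are the standard involutivity, SWAP-from-CNOT, and locality identities for $\cnot$; [CNOT.5] is CNOT commutation; [CNOT.8] encodes the $\Z_2$-bialgebra structure between the dot and $\oplus$ nodes; and the remaining axioms express how the ancillae $\onein,\oneout$ interact with controls and targets.

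Next I would construct the inverse functor $G: \mathsf{AffPIsom}(\Z_2) \to \CNOT$ via a normal form. Given $\phi: n\rightharpoonup m$, present the graph $\Gamma_\phi$ by a system of affine $\Z_2$-linear equations in reduced row-echelon form, and realize this system as a circuit by (i) preparing $|1\rangle$ ancillae corresponding to the affine offsets, (ii) building the linear part using a layer of $\cnot$ gates (since every invertible $\Z_2$-linear map is a product of transvections), and (iii) using $\<1|$ postselections to impose the output constraints. This yields a concrete well-defined $G$, and by construction $F\circ G = \mathrm{id}$, giving essential surjectivity and fullness.

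The main obstacle, and the technical heart of the proof, is faithfulness: showing that every $\CNOT$-circuit can be rewritten via [CNOT.1]--[CNOT.9] into the canonical normal form produced by $G$. The strategy is a Gaussian-elimination argument carried out graphically. Using [CNOT.4] and [CNOT.7], all $\onein$ ancillae are pushed to the left boundary and all $\oneout$ copostselections to the right. The middle CNOT-only block is then reduced to a row-echelon form by repeatedly applying [CNOT.2], [CNOT.3], [CNOT.5], and [CNOT.8], which together furnish all the elementary row/column operations on the underlying $\Z_2$-matrix. Finally, [CNOT.9] handles the interaction between parallel ancilla pairs and the residual linear block, which is what lets the normal form detect when a qubit is truly ``disconnected'' versus merely isolated by postselection. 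Once the normal form is shown to be unique up to data that $F$ already distinguishes, faithfulness follows, and the discrete-inverse equivalence is established.
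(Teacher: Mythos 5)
The paper does not actually contain a proof of this theorem: it is recalled from the earlier work on $\CNOT$ cited in Section~\ref{sec:cnot}, so your attempt can only be measured against the strategy of that reference. Your outline does follow the same broad route --- a sound interpretation into affine partial isomorphisms over $\Z_2$, fullness by explicitly synthesising any affine partial isomorphism from $\cnot$, $\onein$ and $\oneout$, and faithfulness via a normal form --- and your semantic assignments for the generators and the soundness/synthesis steps are unproblematic.

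The genuine gap is that the faithfulness argument, which is the entire mathematical content of the theorem, is asserted rather than carried out. Claiming that \ref{CNOT.2}, \ref{CNOT.3}, \ref{CNOT.5} and \ref{CNOT.8} ``furnish all the elementary row/column operations'' does not engage with the real difficulty: a morphism here is not an invertible matrix but a partial affine isomorphism, so a normal form must simultaneously record a domain of definition, an affine offset and an induced isomorphism, and you never pin down a canonical circuit for such a datum, let alone prove that every circuit rewrites to one using only the nine axioms. In particular you do not address internal ancillae --- a $\onein$ whose wire interacts with the rest of the circuit before being consumed by a $\oneout$ --- which is exactly where \ref{CNOT.4}, \ref{CNOT.7} and \ref{CNOT.9} do their work, nor the degenerate case where a circuit denotes the nowhere-defined map (e.g.\ the scalar $\<0|1\>$), which needs its own canonical representative and a proof that the rewriting detects it; uniqueness of the normal form is likewise only postulated. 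Finally, ``discrete-inverse equivalent'' asserts preservation of the inverse-category structure, not merely an equivalence of categories; this should at least be observed to follow from your functors being $\dagger$-monoidal. As it stands the proposal is a correct plan, not a proof.
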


\section{Stabilizer quantum mechanics and the angle-free ZX-calculus}

In this section, we briefly describe the well known fragment of quantum mechanics known as stabilizer quantum mechanics.  In particular we focus on the real fragment of stabilizer mechanics, and describe a complete axiomization thereof called the angle-free ZX-calculus.  Stabilizer quantum mechanics are very well studied, a good reference from a categorical perspective is given in \cite{backensthesis}.

\begin{definition} 
The {\bf Pauli matrices} are the complex matrices:
$$
X:=
\begin{bmatrix}
    0 & 1\\
    1 & 0
\end{bmatrix}
\hspace*{.5cm}
Y:=
\begin{bmatrix}
    0 & -i\\
    i & 0
\end{bmatrix}
\hspace*{.5cm}
Z:=
\begin{bmatrix}
    1 & 0\\
    0 & -1
\end{bmatrix}
$$

The {\bf Pauli group} on $n$ is the closure of the set:
$$P_n:=\{\lambda a_1\otimes\cdots\otimes a_n | \lambda \in \{\pm1,\pm i\}, a_i \in \{I_2,X,Y,Z\}\}$$
under matrix multiplication.

The {\bf stabilizer group} of  $|\phi\>$ denoted by $S_{|\phi\>}$ a quantum state is the group of operators for which $|\phi\>$ is a +1 eigenvector.
A state is a  {\bf stabilizer state} in case it is stabilized by a subgroup of $P_n$.
\end{definition}

\begin{definition}
The {\bf Clifford group} on $n$ is the group of operators which acts on the Pauli group on $n$ by conjugation:
$$
C_n:=\{U \in \U(2^n) |\forall p \in P_n, UpU^{-1} \in P_n \}
$$
\end{definition}

There is an algebraic description of stabilizer states:

\begin{lemma}
All $n$ qubit stabilizer states have the form $C|0\>^{\otimes n}$, for some member $C$ of the Clifford group on $n$ qubits.
\end{lemma}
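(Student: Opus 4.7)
The plan is to show that a stabilizer state is uniquely determined (up to global phase) by a maximal abelian subgroup of $P_n$ not containing $-I$, and then show that the Clifford group acts transitively on such subgroups, with $\langle Z_1, \ldots, Z_n \rangle$ as a distinguished representative stabilizing $\ket{0}^{\otimes n}$.

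First I would establish that for any pure $n$-qubit stabilizer state $\ket{\phi}$, the stabilizer $S_{\ket{\phi}}$ is an abelian subgroup of $P_n$ of order $2^n$ (equivalently, generated by $n$ independent commuting Pauli operators) which does not contain $-I$. Abelianness follows because if $p, q \in S_{\ket{\phi}}$ anticommuted, then $\ket{\phi} = pq\ket{\phi} = -qp\ket{\phi} = -\ket{\phi}$; the absence of $-I$ is immediate since $-I\ket{\phi} = -\ket{\phi}$; and the order count follows by decomposing the identity as a sum of projectors $\tfrac{1}{|S|}\sum_{s \in S}s$ onto the joint $+1$-eigenspace. In particular, the joint $+1$-eigenspace of such a group is one-dimensional, so $S_{\ket{\phi}}$ determines $\ket{\phi}$ up to global phase.

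Next I would verify that $\ket{0}^{\otimes n}$ is the stabilizer state for $\mathcal{Z} := \langle Z_1, \ldots, Z_n \rangle$, which is by direct computation. The core step is then to show that for any maximal abelian subgroup $S \leq P_n$ with $-I \notin S$, there exists $C \in C_n$ such that $C S C^{-1} = \mathcal{Z}$. I would prove this by induction on $n$: given generators $g_1, \ldots, g_n$ of $S$, one can pick a Pauli $h$ that anticommutes with $g_1$ and commutes with $g_2, \ldots, g_n$ (this exists because the $g_i$ form an independent set in the $\mathbb{F}_2$-symplectic vector space $P_n / \{\pm 1, \pm i\}$), and then use a Clifford operator implementing the symplectic transformation that sends $(g_1, h)$ to $(Z_1, X_1)$ while fixing $Z_2, \ldots, Z_n$ and leaving the remaining $g_i$ in the subalgebra on qubits $2, \ldots, n$. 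Iterating gives $C$ with $C S C^{-1} = \mathcal{Z}$.

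Finally, once such a $C$ is obtained, the state $C^{-1}\ket{\phi}$ is stabilized by $\mathcal{Z}$, hence equals $\ket{0}^{\otimes n}$ up to a global phase (which can be absorbed into $C$), so $\ket{\phi} = C\ket{0}^{\otimes n}$. The main obstacle is the transitivity claim in the inductive step: proving that one can always find a Clifford that realizes the desired symplectic change of basis on $P_n$ modulo phases. This ultimately reduces to showing that the Clifford group surjects onto the symplectic group $\mathrm{Sp}(2n, \mathbb{F}_2)$, for which it suffices to exhibit Clifford implementations of a generating set of symplectic transvections, e.g.\ via Hadamards, phase gates, and $\mathrm{CNOT}$s acting on the appropriate qubit pairs.
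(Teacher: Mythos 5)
The paper does not actually prove this lemma: it is stated as a known fact from the stabilizer-formalism literature (the background section defers to standard references such as Backens's thesis and Gottesman), so there is no in-paper argument to compare yours against. Taken on its own, your proof is the standard one and is essentially sound: the projector $\tfrac{1}{|S|}\sum_{s\in S}s$ argument correctly forces $|S_{\ket{\phi}}\cap P_n|=2^n$ and one-dimensionality of the joint $+1$-eigenspace, and transitivity of the Clifford action on maximal abelian subgroups of $P_n$ not containing $-I$, via surjectivity of $C_n\to\mathrm{Sp}(2n,\mathbb{F}_2)$, does the rest.

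One detail you gloss over: a Clifford operator realizing the desired symplectic transformation only determines the conjugates of your generators up to sign, i.e.\ you get $Cg_iC^{-1}=\pm Z_i$. You must still argue that the signs can be corrected, which is done by further conjugating with a tensor product of Pauli $X$'s (these lie in the Clifford group and flip exactly the signs of the $Z_i$ on the chosen qubits, which suffices because the sign pattern on a generating set can be arbitrary while $-I\notin S$ guarantees consistency). With that one-line repair, and the observation that the residual global phase can be absorbed into $C$ without leaving the Clifford group, the argument is complete.
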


Indeed, we also consider a subgroup of $C_n$:

\begin{definition}
The  {\bf real Clifford group} on $n$ qubits, is the subgroup of the Clifford group with real elements, ie:

$$
C_n^{re}:=\{ U \in C_n | \bar{U} = U\}
$$

So that an $n$-qubit {\bf real stabilizer state } is a state of the form  $C|0\>^{\otimes n}$ for some real Clifford operator $C$.
\end{definition}

We say that a {\bf (real) stabilizer circuit} is a (real) Clifford composed with state preparations and measurements in the computational basis.

\label{sec:ZX}
 The {\bf ZX-calculus} is a collection of calculi describing the interaction of the complementary Frobenius algebras corresponding to the Pauli  $Z$ and $X$ observables and their phases.   The first iteration of the ZX-calculus was described in \cite{coecke2011interacting}.

  \begin{definition}
\label{def:frob}

  A {\bf Frobenius algebra} in a monoidal category is a $5$-tuple:
  $$(A,\zmul,\zmulu,\zcomul,\zcomulu)$$
 such that $(A,\zmul,\zmulu)$ is a monoid and $(A,\zcomul,\zcomulu)$ is a comonoid:

$$
\begin{tikzpicture}[yscale=-1,xscale=-1]
	\begin{pgfonlayer}{nodelayer}
		\node [style=zmul] (0) at (-2, 0.5) {};
		\node [style=none] (1) at (-1, -0) {};
		\node [style=zmul] (2) at (-1, 1) {};
		\node [style=none] (3) at (-3, 0.5) {};
	\end{pgfonlayer}
	\begin{pgfonlayer}{edgelayer}
		\draw [style=simple, in=-27, out=180, looseness=1.00] (1.center) to (0);
		\draw [style=simple, in=180, out=27, looseness=1.00] (0) to (2.center);
		\draw [style=simple] (3.center) to (0);
	\end{pgfonlayer}
\end{tikzpicture}
=
\begin{tikzpicture}[xscale=-1]
	\begin{pgfonlayer}{nodelayer}
		\node [style=zmul] (0) at (-2, 0.5) {};
		\node [style=none] (1) at (-1, -0) {};
		\node [style=zmul] (2) at (-1, 1) {};
		\node [style=none] (3) at (-3, 0.5) {};
	\end{pgfonlayer}
	\begin{pgfonlayer}{edgelayer}
		\draw [style=simple, in=-27, out=180, looseness=1.00] (1.center) to (0);
		\draw [style=simple, in=180, out=27, looseness=1.00] (0) to (2.center);
		\draw [style=simple] (3.center) to (0);
	\end{pgfonlayer}
\end{tikzpicture}
=
\begin{tikzpicture}
	\begin{pgfonlayer}{nodelayer}
		\node [style=none] (0) at (-2, 0.5) {};
		\node [style=none] (1) at (-3, 0.5) {};
	\end{pgfonlayer}
	\begin{pgfonlayer}{edgelayer}
		\draw [style=simple] (0.center) to (1.center);
	\end{pgfonlayer}
\end{tikzpicture}
=
\begin{tikzpicture}
	\begin{pgfonlayer}{nodelayer}
		\node [style=zmul] (0) at (-2, 0.5) {};
		\node [style=none] (1) at (-1, -0) {};
		\node [style=zmul] (2) at (-1, 1) {};
		\node [style=none] (3) at (-3, 0.5) {};
	\end{pgfonlayer}
	\begin{pgfonlayer}{edgelayer}
		\draw [style=simple, in=-27, out=180, looseness=1.00] (1.center) to (0);
		\draw [style=simple, in=180, out=27, looseness=1.00] (0) to (2.center);
		\draw [style=simple] (3.center) to (0);
	\end{pgfonlayer}
\end{tikzpicture}
=
\begin{tikzpicture}[yscale=-1]
	\begin{pgfonlayer}{nodelayer}
		\node [style=zmul] (0) at (-2, 0.5) {};
		\node [style=none] (1) at (-1, -0) {};
		\node [style=zmul] (2) at (-1, 1) {};
		\node [style=none] (3) at (-3, 0.5) {};
	\end{pgfonlayer}
	\begin{pgfonlayer}{edgelayer}
		\draw [style=simple, in=-27, out=180, looseness=1.00] (1.center) to (0);
		\draw [style=simple, in=180, out=27, looseness=1.00] (0) to (2.center);
		\draw [style=simple] (3.center) to (0);
	\end{pgfonlayer}
\end{tikzpicture}
$$

$$
\begin{tikzpicture}
	\begin{pgfonlayer}{nodelayer}
		\node [style=zmul] (0) at (0, -0) {};
		\node [style=none] (1) at (1, -0) {};
		\node [style=none] (2) at (-2, 1) {};
		\node [style=none] (3) at (-2, -0) {};
		\node [style=none] (4) at (-2, -0.75) {};
		\node [style=zmul] (5) at (-1, 0.5) {};
	\end{pgfonlayer}
	\begin{pgfonlayer}{edgelayer}
		\draw [style=simple, in=-159, out=0, looseness=1.00] (4.center) to (0);
		\draw [style=simple] (0) to (1.center);
		\draw [style=simple, in=0, out=153, looseness=1.00] (0) to (5);
		\draw [style=simple, in=0, out=153, looseness=1.00] (5) to (2.center);
		\draw [style=simple, in=-153, out=0, looseness=1.00] (3.center) to (5);
	\end{pgfonlayer}
\end{tikzpicture}
=
\begin{tikzpicture}[yscale=-1]
	\begin{pgfonlayer}{nodelayer}
		\node [style=zmul] (0) at (0, -0) {};
		\node [style=none] (1) at (1, -0) {};
		\node [style=none] (2) at (-2, 1) {};
		\node [style=none] (3) at (-2, -0) {};
		\node [style=none] (4) at (-2, -0.75) {};
		\node [style=zmul] (5) at (-1, 0.5) {};
	\end{pgfonlayer}
	\begin{pgfonlayer}{edgelayer}
		\draw [style=simple, in=-159, out=0, looseness=1.00] (4.center) to (0);
		\draw [style=simple] (0) to (1.center);
		\draw [style=simple, in=0, out=153, looseness=1.00] (0) to (5);
		\draw [style=simple, in=0, out=153, looseness=1.00] (5) to (2.center);
		\draw [style=simple, in=-153, out=0, looseness=1.00] (3.center) to (5);
	\end{pgfonlayer}
\end{tikzpicture}
\hspace*{.5cm}
\begin{tikzpicture}[xscale=-1]
	\begin{pgfonlayer}{nodelayer}
		\node [style=zmul] (0) at (0, -0) {};
		\node [style=none] (1) at (1, -0) {};
		\node [style=none] (2) at (-2, 1) {};
		\node [style=none] (3) at (-2, -0) {};
		\node [style=none] (4) at (-2, -0.75) {};
		\node [style=zmul] (5) at (-1, 0.5) {};
	\end{pgfonlayer}
	\begin{pgfonlayer}{edgelayer}
		\draw [style=simple, in=-159, out=0, looseness=1.00] (4.center) to (0);
		\draw [style=simple] (0) to (1.center);
		\draw [style=simple, in=0, out=153, looseness=1.00] (0) to (5);
		\draw [style=simple, in=0, out=153, looseness=1.00] (5) to (2.center);
		\draw [style=simple, in=-153, out=0, looseness=1.00] (3.center) to (5);
	\end{pgfonlayer}
\end{tikzpicture}
=
\begin{tikzpicture}[yscale=-1, xscale=-1]
	\begin{pgfonlayer}{nodelayer}
		\node [style=zmul] (0) at (0, -0) {};
		\node [style=none] (1) at (1, -0) {};
		\node [style=none] (2) at (-2, 1) {};
		\node [style=none] (3) at (-2, -0) {};
		\node [style=none] (4) at (-2, -0.75) {};
		\node [style=zmul] (5) at (-1, 0.5) {};
	\end{pgfonlayer}
	\begin{pgfonlayer}{edgelayer}
		\draw [style=simple, in=-159, out=0, looseness=1.00] (4.center) to (0);
		\draw [style=simple] (0) to (1.center);
		\draw [style=simple, in=0, out=153, looseness=1.00] (0) to (5);
		\draw [style=simple, in=0, out=153, looseness=1.00] (5) to (2.center);
		\draw [style=simple, in=-153, out=0, looseness=1.00] (3.center) to (5);
	\end{pgfonlayer}
\end{tikzpicture}
$$

And the Frobenius law holds:
 \begin{description}
 \item[{[F]}]
 \hfil
$
\begin{tikzpicture}
	\begin{pgfonlayer}{nodelayer}
		\node [style=zmul] (0) at (4, -0) {};
		\node [style=zmul] (1) at (3, 1) {};
		\node [style=rn] (2) at (5, 1) {};
		\node [style=rn] (3) at (5, -0) {};
		\node [style=rn] (4) at (2, -0) {};
		\node [style=rn] (5) at (2, 1) {};
	\end{pgfonlayer}
	\begin{pgfonlayer}{edgelayer}
		\draw [style=simple, in=150, out=-30, looseness=1.00] (1) to (0);
		\draw [style=simple, in=180, out=30, looseness=1.00] (1) to (2);
		\draw [style=simple] (3) to (0);
		\draw [style=simple, in=0, out=-150, looseness=1.00] (0) to (4);
		\draw [style=simple] (5) to (1);
	\end{pgfonlayer}
\end{tikzpicture}
=
\begin{tikzpicture}
	\begin{pgfonlayer}{nodelayer}
		\node [style=rn] (0) at (5, 1) {};
		\node [style=rn] (1) at (5, -0) {};
		\node [style=rn] (2) at (2, -0) {};
		\node [style=rn] (3) at (2, 1) {};
		\node [style=zmul] (4) at (3, 0.5) {};
		\node [style=zmul] (5) at (4, 0.5) {};
	\end{pgfonlayer}
	\begin{pgfonlayer}{edgelayer}
		\draw [style=simple, in=-27, out=180, looseness=1.00] (1) to (5);
		\draw [style=simple, in=180, out=27, looseness=1.00] (5) to (0);
		\draw [style=simple] (5) to (4);
		\draw [style=simple, in=0, out=153, looseness=1.00] (4) to (3);
		\draw [style=simple, in=0, out=-153, looseness=1.00] (4) to (2);
	\end{pgfonlayer}
\end{tikzpicture}
=
\begin{tikzpicture}[xscale=-1]
	\begin{pgfonlayer}{nodelayer}
		\node [style=zmul] (0) at (4, -0) {};
		\node [style=zmul] (1) at (3, 1) {};
		\node [style=rn] (2) at (5, 1) {};
		\node [style=rn] (3) at (5, -0) {};
		\node [style=rn] (4) at (2, -0) {};
		\node [style=rn] (5) at (2, 1) {};
	\end{pgfonlayer}
	\begin{pgfonlayer}{edgelayer}
		\draw [style=simple, in=150, out=-30, looseness=1.00] (1) to (0);
		\draw [style=simple, in=180, out=30, looseness=1.00] (1) to (2);
		\draw [style=simple] (3) to (0);
		\draw [style=simple, in=0, out=-150, looseness=1.00] (0) to (4);
		\draw [style=simple] (5) to (1);
	\end{pgfonlayer}
\end{tikzpicture}
$
 \end{description}

 A Frobenius algebra is {\bf special} if 
 $$
 \begin{tikzpicture}
	\begin{pgfonlayer}{nodelayer}
		\node [style=zmul] (0) at (2, 3) {};
		\node [style=zmul] (1) at (3, 3) {};
		\node [style=rn] (2) at (4, 3) {};
		\node [style=rn] (3) at (1, 3) {};
	\end{pgfonlayer}
	\begin{pgfonlayer}{edgelayer}
		\draw [style=simple] (3) to (0);
		\draw [style=simple, bend left=45, looseness=1.25] (0) to (1);
		\draw [style=simple, bend left=45, looseness=1.25] (1) to (0);
		\draw [style=simple] (1) to (2);
	\end{pgfonlayer}
\end{tikzpicture}
=
\begin{tikzpicture}
	\begin{pgfonlayer}{nodelayer}
		\node [style=rn] (0) at (2, 3) {};
		\node [style=rn] (1) at (1, 3) {};
	\end{pgfonlayer}
	\begin{pgfonlayer}{edgelayer}
		\draw [style=simple] (0) to (1);
	\end{pgfonlayer}
\end{tikzpicture}
 $$
and {\bf commutative} if the underlying monoid and comonoids are commutative and cocommutative:

$$
\begin{tikzpicture}
	\begin{pgfonlayer}{nodelayer}
		\node [style=zmul] (0) at (-2, 0.5) {};
		\node [style=none] (1) at (-1, -0) {};
		\node [style=none] (2) at (-1, 1) {};
		\node [style=none] (3) at (-3, 0.5) {};
		\node [style=none] (4) at (0, 1) {};
		\node [style=none] (5) at (0, -0) {};
	\end{pgfonlayer}
	\begin{pgfonlayer}{edgelayer}
		\draw [style=simple, in=-27, out=180, looseness=1.00] (1.center) to (0);
		\draw [style=simple, in=180, out=27, looseness=1.00] (0) to (2.center);
		\draw [style=simple] (3.center) to (0);
		\draw [style=simple, in=180, out=0, looseness=0.75] (1.center) to (4.center);
		\draw [style=simple, in=0, out=180, looseness=0.75] (5.center) to (2.center);
	\end{pgfonlayer}
\end{tikzpicture}
=
\begin{tikzpicture}
	\begin{pgfonlayer}{nodelayer}
		\node [style=zmul] (0) at (-2, 0.5) {};
		\node [style=none] (1) at (-1, -0) {};
		\node [style=none] (2) at (-1, 1) {};
		\node [style=none] (3) at (-3, 0.5) {};
	\end{pgfonlayer}
	\begin{pgfonlayer}{edgelayer}
		\draw [style=simple, in=-27, out=180, looseness=1.00] (1.center) to (0);
		\draw [style=simple, in=180, out=27, looseness=1.00] (0) to (2.center);
		\draw [style=simple] (3.center) to (0);
	\end{pgfonlayer}
\end{tikzpicture}
\hspace*{.5cm}
\begin{tikzpicture}[xscale=-1]
	\begin{pgfonlayer}{nodelayer}
		\node [style=zmul] (0) at (-2, 0.5) {};
		\node [style=none] (1) at (-1, -0) {};
		\node [style=none] (2) at (-1, 1) {};
		\node [style=none] (3) at (-3, 0.5) {};
		\node [style=none] (4) at (0, 1) {};
		\node [style=none] (5) at (0, -0) {};
	\end{pgfonlayer}
	\begin{pgfonlayer}{edgelayer}
		\draw [style=simple, in=-27, out=180, looseness=1.00] (1.center) to (0);
		\draw [style=simple, in=180, out=27, looseness=1.00] (0) to (2.center);
		\draw [style=simple] (3.center) to (0);
		\draw [style=simple, in=180, out=0, looseness=0.75] (1.center) to (4.center);
		\draw [style=simple, in=0, out=180, looseness=0.75] (5.center) to (2.center);
	\end{pgfonlayer}
\end{tikzpicture}
=
\begin{tikzpicture}[xscale=-1]
	\begin{pgfonlayer}{nodelayer}
		\node [style=zmul] (0) at (-2, 0.5) {};
		\node [style=none] (1) at (-1, -0) {};
		\node [style=none] (2) at (-1, 1) {};
		\node [style=none] (3) at (-3, 0.5) {};
	\end{pgfonlayer}
	\begin{pgfonlayer}{edgelayer}
		\draw [style=simple, in=-27, out=180, looseness=1.00] (1.center) to (0);
		\draw [style=simple, in=180, out=27, looseness=1.00] (0) to (2.center);
		\draw [style=simple] (3.center) to (0);
	\end{pgfonlayer}
\end{tikzpicture}
$$

 A {\bf \dag-Frobenius algebra} $(A,\zmul,\zmulu)$ is a Frobenius algebra of the form 
 $$(A,\zmul,\zmulu,\zmul^\dag,\zmulu^\dag)$$
 That is to say, the monoid and comonoid are daggers of each other.
 
Special commutative \dag-Frobenius algebras are called {\bf classical structures}.

 A non-(co)unital  special commutative \dag-Frobenius algebra  is called  a {\bf semi-Frobenius algebra}. Semi-Frobenius algebras are used to construct a weak product structure for inverse categories such as $\CNOT$ and $\TOF$.

\end{definition}

 However, we are interested in a simple fragment of the ZX-calculus, namely the angle-free calculus for real stabilizer circuits, $\ZX_\pi$, described in \cite{realzx}  (slightly modified to account for scalars):
 \begin{definition}
\label{def:ZX.pi} 

Let $\ZX_\pi $ denote the $\dag$-compact closed PROP with generators:

$$
\zmul \hspace*{.5cm} \zmulu \hspace*{.5cm} \zcomul \hspace*{.5cm} \zcomulu \hspace*{.5cm} \hadamard
$$
such that
$$( \zmul, \zmulu, \zcomul, \zcomulu)$$
is a classical structure, corresponding to the $Z$ basis, 
and the following identities also hold up to swapping colours:

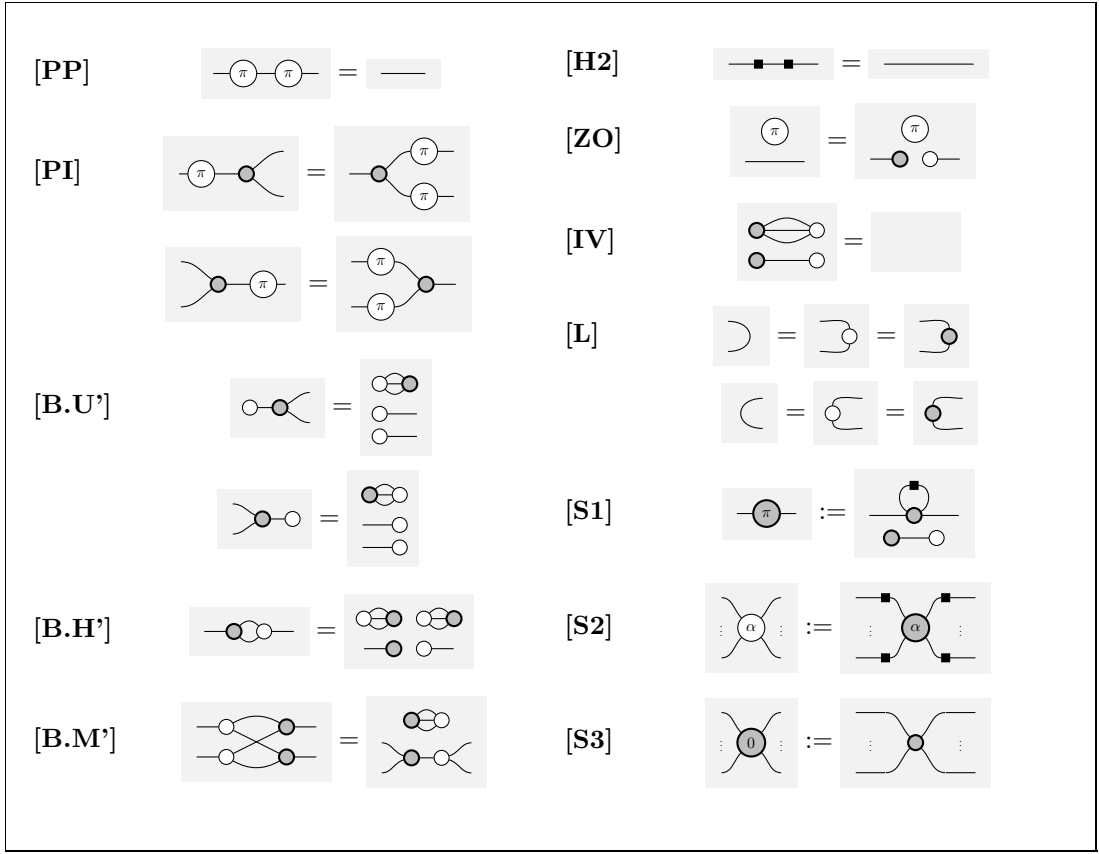
\begin{figure}[H]
	\noindent
	\scalebox{1.0}{%
		\vbox{%
			\begin{mdframed}
				\begin{multicols}{2}
					\begin{description}
						\item[{[PP]}]
						\namedlabel{ZX.pi:PP}{\bf [PP]}
						\hfil
						$
	\begin{tikzpicture}
	\begin{pgfonlayer}{nodelayer}
		\node [style=xmul] (0) at (8, -0) {$\pi$};
		\node [style=xmul] (1) at (9.5, -0) {$\pi$};
		\node [style=none] (2) at (10.5, -0) {};
		\node [style=none] (3) at (7, -0) {};
	\end{pgfonlayer}	
	\begin{pgfonlayer}{edgelayer}
		\draw (0) to (1);
		\draw (2.center) to (1);
		\draw (0) to (3.center);
	\end{pgfonlayer}
\end{tikzpicture}
						=
						\begin{tikzpicture}
	\begin{pgfonlayer}{nodelayer}
		\node [style=rn] (0) at (8, -0) {};
		\node [style=rn] (1) at (9.5, -0) {};
	\end{pgfonlayer}
	\begin{pgfonlayer}{edgelayer}
		\draw (0) to (1);
	\end{pgfonlayer}
\end{tikzpicture}
						$

						\item[{[PI]}]
						\namedlabel{ZX.pi:PI}{\bf [PI]}
						\hfil
						$
\begin{tikzpicture}
	\begin{pgfonlayer}{nodelayer}
		\node [style=rn] (0) at (4.25, 2.25) {};
		\node [style=rn] (1) at (0.75, 3) {};
		\node [style=zmul] (2) at (3, 3) {};
		\node [style=xmul] (3) at (1.5, 3) {$\pi$};
		\node [style=rn] (4) at (4.25, 3.75) {};
	\end{pgfonlayer}
	\begin{pgfonlayer}{edgelayer}
		\draw [style=simple] (1) to (3);
		\draw [style=simple] (3) to (2);
		\draw [style=simple, in=180, out=-37, looseness=1.00] (2) to (0);
		\draw [style=simple, in=37, out=180, looseness=1.00] (4) to (2);
	\end{pgfonlayer}
\end{tikzpicture}
=
\begin{tikzpicture}
	\begin{pgfonlayer}{nodelayer}
		\node [style=rn] (0) at (4.5, 2.25) {};
		\node [style=rn] (1) at (1, 3) {};
		\node [style=rn] (2) at (4.5, 3.75) {};
		\node [style=zmul] (3) at (2, 3) {};
		\node [style=xmul] (4) at (3.5, 3.75) {$\pi$};
		\node [style=xmul] (5) at (3.5, 2.25) {$\pi$};
	\end{pgfonlayer}
	\begin{pgfonlayer}{edgelayer}
		\draw [style=simple] (1) to (3);
		\draw [style=simple] (5) to (0);
		\draw [style=simple] (4) to (2);
		\draw [style=simple, in=37, out=180, looseness=1.00] (4) to (3);
		\draw [style=simple, in=-37, out=180, looseness=1.00] (5) to (3);
	\end{pgfonlayer}
\end{tikzpicture}$

				\hfil$
\begin{tikzpicture}
	\begin{pgfonlayer}{nodelayer}
		\node [style=rn] (0) at (0.75, 2.25) {};
		\node [style=rn] (1) at (4.25, 3) {};
		\node [style=zmul] (2) at (2, 3) {};
		\node [style=xmul] (3) at (3.5, 3) {$\pi$};
		\node [style=rn] (4) at (0.75, 3.75) {};
	\end{pgfonlayer}
	\begin{pgfonlayer}{edgelayer}
		\draw [style=simple] (1) to (3);
		\draw [style=simple] (3) to (2);
		\draw [style=simple, in=0, out=-143, looseness=1.00] (2) to (0);
		\draw [style=simple, in=143, out=0, looseness=1.00] (4) to (2);
	\end{pgfonlayer}
\end{tikzpicture}
=
\begin{tikzpicture}
	\begin{pgfonlayer}{nodelayer}
		\node [style=rn] (0) at (1, 2.25) {};
		\node [style=rn] (1) at (4.5, 3) {};
		\node [style=rn] (2) at (1, 3.75) {};
		\node [style=zmul] (3) at (3.5, 3) {};
		\node [style=xmul] (4) at (2, 3.75) {$\pi$};
		\node [style=xmul] (5) at (2, 2.25) {$\pi$};
	\end{pgfonlayer}
	\begin{pgfonlayer}{edgelayer}
		\draw [style=simple] (1) to (3);
		\draw [style=simple] (5) to (0);
		\draw [style=simple] (4) to (2);
		\draw [style=simple, in=143, out=0, looseness=1.00] (4) to (3);
		\draw [style=simple, in=-143, out=0, looseness=1.00] (5) to (3);
	\end{pgfonlayer}
\end{tikzpicture}$
						
						\item[{[B.U']}]
						\namedlabel{ZX.pi:B.U}{\bf [B.U']}
						\hfil
						$
\begin{tikzpicture}
	\begin{pgfonlayer}{nodelayer}
		\node [style=rn] (0) at (5, 4.5) {};
		\node [style=xmul] (1) at (3, 4) {};
		\node [style=zmul] (2) at (4, 4) {};
		\node [style=rn] (3) at (5, 3.5) {};
	\end{pgfonlayer}
	\begin{pgfonlayer}{edgelayer}
		\draw [style=simple, in=-27, out=180, looseness=1.00] (3) to (2);
		\draw [style=simple] (2) to (1);
		\draw [style=simple, in=180, out=27, looseness=1.00] (2) to (0);
	\end{pgfonlayer}
\end{tikzpicture}
=
\begin{tikzpicture}
	\begin{pgfonlayer}{nodelayer}
		\node [style=rn] (0) at (5.25, 4.5) {};
		\node [style=xmul] (1) at (4, 4.5) {};
		\node [style=rn] (2) at (5.25, 3.75) {};
		\node [style=xmul] (3) at (4, 3.75) {};
		\node [style=xmul] (4) at (4, 5.5) {};
		\node [style=zmul] (5) at (5, 5.5) {};
	\end{pgfonlayer}
	\begin{pgfonlayer}{edgelayer}
		\draw [style=simple] (3) to (2);
		\draw [style=simple] (0) to (1);
		\draw [style=simple, in=-45, out=-135, looseness=1.25] (5) to (4);
		\draw [style=simple, in=135, out=45, looseness=1.25] (4) to (5);
		\draw [style=simple] (5) to (4);
	\end{pgfonlayer}
\end{tikzpicture}
$ 

\hfil
$
\begin{tikzpicture}[xscale=-1]
	\begin{pgfonlayer}{nodelayer}
		\node [style=rn] (0) at (5, 4.5) {};
		\node [style=xmul] (1) at (3, 4) {};
		\node [style=zmul] (2) at (4, 4) {};
		\node [style=rn] (3) at (5, 3.5) {};
	\end{pgfonlayer}
	\begin{pgfonlayer}{edgelayer}
		\draw [style=simple, in=-27, out=180, looseness=1.00] (3) to (2);
		\draw [style=simple] (2) to (1);
		\draw [style=simple, in=180, out=27, looseness=1.00] (2) to (0);
	\end{pgfonlayer}
\end{tikzpicture}
=
\begin{tikzpicture}[xscale=-1]
	\begin{pgfonlayer}{nodelayer}
		\node [style=rn] (0) at (5.25, 4.5) {};
		\node [style=xmul] (1) at (4, 4.5) {};
		\node [style=rn] (2) at (5.25, 3.75) {};
		\node [style=xmul] (3) at (4, 3.75) {};
		\node [style=xmul] (4) at (4, 5.5) {};
		\node [style=zmul] (5) at (5, 5.5) {};
	\end{pgfonlayer}
	\begin{pgfonlayer}{edgelayer}
		\draw [style=simple] (3) to (2);
		\draw [style=simple] (0) to (1);
		\draw [style=simple, in=-45, out=-135, looseness=1.25] (5) to (4);
		\draw [style=simple, in=135, out=45, looseness=1.25] (4) to (5);
		\draw [style=simple] (5) to (4);
	\end{pgfonlayer}
\end{tikzpicture}
$

	\item[{[B.H']}]
	\namedlabel{ZX.pi:B.H}{\bf [B.H']}
	\hfil$
\begin{tikzpicture}
	\begin{pgfonlayer}{nodelayer}
		\node [style=rn] (0) at (5, 5) {};
		\node [style=xmul] (1) at (4, 5) {};
		\node [style=zmul] (2) at (3, 5) {};
		\node [style=rn] (3) at (2, 5) {};
	\end{pgfonlayer}
	\begin{pgfonlayer}{edgelayer}
		\draw [style=simple] (0) to (1);
		\draw [style=simple, bend right=45, looseness=1.25] (1) to (2);
		\draw [style=simple] (3) to (2);
		\draw [style=simple, bend right=45, looseness=1.25] (2) to (1);
	\end{pgfonlayer}
\end{tikzpicture}
=
\begin{tikzpicture}
	\begin{pgfonlayer}{nodelayer}
		\node [style=rn] (0) at (2, 4.5) {};
		\node [style=zmul] (1) at (3, 4.5) {};
		\node [style=rn] (2) at (5, 4.5) {};
		\node [style=xmul] (3) at (4, 4.5) {};
		\node [style=xmul] (4) at (2, 5.5) {};
		\node [style=zmul] (5) at (3, 5.5) {};
		\node [style=zmul] (6) at (5, 5.5) {};
		\node [style=xmul] (7) at (4, 5.5) {};
	\end{pgfonlayer}
	\begin{pgfonlayer}{edgelayer}
		\draw [style=simple] (0) to (1);
		\draw [style=simple] (2) to (3);
		\draw [style=simple, in=-45, out=-135, looseness=1.25] (5) to (4);
		\draw [style=simple, in=135, out=45, looseness=1.25] (4) to (5);
		\draw [style=simple] (5) to (4);
		\draw [style=simple, in=-45, out=-135, looseness=1.25] (6) to (7);
		\draw [style=simple, in=135, out=45, looseness=1.25] (7) to (6);
		\draw [style=simple] (6) to (7);
	\end{pgfonlayer}
\end{tikzpicture}$

						\item[{[B.M']}]
						\namedlabel{ZX.pi:B.M}{\bf [B.M']}
						\hfil$
\begin{tikzpicture}
	\begin{pgfonlayer}{nodelayer}
		\node [style=rn] (0) at (-2, 3) {};
		\node [style=xmul] (1) at (-1, 3) {};
		\node [style=xmul] (2) at (-1, 2) {};
		\node [style=zmul] (3) at (1, 3) {};
		\node [style=zmul] (4) at (1, 2) {};
		\node [style=rn] (5) at (2, 2) {};
		\node [style=rn] (6) at (-2, 2) {};
		\node [style=rn] (7) at (2, 3) {};
	\end{pgfonlayer}
	\begin{pgfonlayer}{edgelayer}
		\draw [style=simple] (6) to (2);
		\draw [style=simple, bend right, looseness=1.00] (2) to (4);
		\draw [style=simple] (4) to (5);
		\draw [style=simple] (3) to (2);
		\draw [style=simple] (4) to (1);
		\draw [style=simple, bend left, looseness=1.00] (1) to (3);
		\draw [style=simple] (0) to (1);
		\draw [style=simple] (7) to (3);
	\end{pgfonlayer}
\end{tikzpicture}
=
\begin{tikzpicture}
	\begin{pgfonlayer}{nodelayer}
		\node [style=rn] (0) at (-1, 3) {};
		\node [style=rn] (1) at (2, 2) {};
		\node [style=rn] (2) at (-1, 2) {};
		\node [style=xmul] (3) at (1, 2.5) {};
		\node [style=zmul] (4) at (0, 2.5) {};
		\node [style=rn] (5) at (2, 3) {};
		\node [style=xmul] (6) at (1, 3.75) {};
		\node [style=zmul] (7) at (0, 3.75) {};
	\end{pgfonlayer}
	\begin{pgfonlayer}{edgelayer}
		\draw [style=simple, in=0, out=165, looseness=1.00] (4) to (0);
		\draw [style=simple, in=-165, out=0, looseness=1.00] (2) to (4);
		\draw [style=simple] (3) to (4);
		\draw [style=simple, in=-15, out=180, looseness=1.00] (1) to (3);
		\draw [style=simple, in=180, out=15, looseness=1.00] (3) to (5);
		\draw [style=simple, in=-45, out=-135, looseness=1.25] (6) to (7);
		\draw [style=simple, in=45, out=135, looseness=1.25] (6) to (7);
		\draw [style=simple] (7) to (6);
	\end{pgfonlayer}
\end{tikzpicture}$
						
						\item[{[H2]}]
						\namedlabel{ZX.pi:H2}{\bf [H2]}
						\hfil$
						\begin{tikzpicture}
	\begin{pgfonlayer}{nodelayer}
		\node [style=rn] (0) at (-2, 3) {};
		\node [style=rn] (1) at (1, 3) {};
		\node [style=h] (2) at (-1, 3) {};
		\node [style=h] (3) at (0, 3) {};
	\end{pgfonlayer}
	\begin{pgfonlayer}{edgelayer}
		\draw [style=simple] (1) to (3);
		\draw [style=simple] (3) to (2);
		\draw [style=simple] (2) to (0);
	\end{pgfonlayer}
\end{tikzpicture}=
\begin{tikzpicture}
	\begin{pgfonlayer}{nodelayer}
		\node [style=rn] (0) at (-2, 3) {};
		\node [style=rn] (1) at (1, 3) {};
	\end{pgfonlayer}
	\begin{pgfonlayer}{edgelayer}
		\draw (1) to (0);
	\end{pgfonlayer}
\end{tikzpicture}
$

\item[{[ZO]}]
\namedlabel{ZX.pi:ZO}{\bf [ZO]}
\hfil$
\begin{tikzpicture}
	\begin{pgfonlayer}{nodelayer}
		\node [style=xmul] (2) at (4, 1) {$\pi$};
		\node [style=rn] (3) at (5, -0) {};
		\node [style=rn] (4) at (3, -0) {};
	\end{pgfonlayer}
	\begin{pgfonlayer}{edgelayer}
		\draw [style=simple] (3) to (4);
	\end{pgfonlayer}
\end{tikzpicture}
=
\begin{tikzpicture}
	\begin{pgfonlayer}{nodelayer}
		\node [style=xmul] (2) at (4.5, 1) {$\pi$};
		\node [style=rn] (3) at (3, -0) {};
		\node [style=rn] (4) at (6, -0) {};
		\node [style=zmul] (5) at (4, -0) {};
		\node [style=xmul] (6) at (5, -0) {};
	\end{pgfonlayer}
	\begin{pgfonlayer}{edgelayer}
		\draw [style=simple] (4) to (6);
		\draw [style=simple] (5) to (3);
	\end{pgfonlayer}
\end{tikzpicture}
$

\item[{[IV]}]
\namedlabel{ZX.pi:IV}{\bf [IV]}
\hfil
$
		\begin{tikzpicture}
	\begin{pgfonlayer}{nodelayer}
		\node [style=zmul] (0) at (0, 2.75) {};
		\node [style=xmul] (1) at (2, 2.75) {};
		\node [style=xmul] (2) at (2, 1.75) {};
		\node [style=zmul] (3) at (0, 1.75) {};
	\end{pgfonlayer}
	\begin{pgfonlayer}{edgelayer}
		\draw (3) to (2);
		\draw [bend right, looseness=1.25] (1) to (0);
		\draw [bend right, looseness=1.25] (0) to (1);
		\draw (1) to (0);
	\end{pgfonlayer}
\end{tikzpicture}
		=
		\begin{tikzpicture}
	\begin{pgfonlayer}{nodelayer}
		\node [style=rn] (0) at (0, 2.75) {};
		\node [style=rn] (1) at (2, 2.75) {};
		\node [style=rn] (2) at (2, 1.75) {};
		\node [style=rn] (3) at (0, 1.75) {};
	\end{pgfonlayer}
	\begin{pgfonlayer}{edgelayer}
	\end{pgfonlayer}
\end{tikzpicture}
		$

\item[{[L]}]
\namedlabel{ZX.pi:L}{\bf [L]}
\hfil
$
\begin{tikzpicture}
	\begin{pgfonlayer}{nodelayer}
		\node [style=rn] (0) at (-2, 1) {};
		\node [style=rn] (1) at (-2, 2) {};
	\end{pgfonlayer}
	\begin{pgfonlayer}{edgelayer}
		\draw [bend left=90, looseness=2.50] (1) to (0);
	\end{pgfonlayer}
\end{tikzpicture}
=
\begin{tikzpicture}
	\begin{pgfonlayer}{nodelayer}
		\node [style=rn] (0) at (-2, 1) {};
		\node [style=rn] (1) at (-2, 2) {};
		\node [style=xmul] (2) at (-1, 1.5) {};
	\end{pgfonlayer}
	\begin{pgfonlayer}{edgelayer}
		\draw [in=90, out=0, looseness=1] (1) to (2);
		\draw [in=0, out=-90, looseness=1] (2) to (0);
	\end{pgfonlayer}
\end{tikzpicture}
=
\begin{tikzpicture}
	\begin{pgfonlayer}{nodelayer}
		\node [style=rn] (0) at (-2, 1) {};
		\node [style=rn] (1) at (-2, 2) {};
		\node [style=zmul] (2) at (-1, 1.5) {};
	\end{pgfonlayer}
	\begin{pgfonlayer}{edgelayer}
		\draw [in=90, out=0, looseness=1] (1) to (2);
		\draw [in=0, out=-90, looseness=1] (2) to (0);
	\end{pgfonlayer}
\end{tikzpicture}
$

\hfil$
\begin{tikzpicture}[xscale=-1]
	\begin{pgfonlayer}{nodelayer}
		\node [style=rn] (0) at (-2, 1) {};
		\node [style=rn] (1) at (-2, 2) {};
	\end{pgfonlayer}
	\begin{pgfonlayer}{edgelayer}
		\draw [bend left=90, looseness=2.50] (1) to (0);
	\end{pgfonlayer}
\end{tikzpicture}
=
\begin{tikzpicture}[xscale=-1]
	\begin{pgfonlayer}{nodelayer}
		\node [style=rn] (0) at (-2, 1) {};
		\node [style=rn] (1) at (-2, 2) {};
		\node [style=xmul] (2) at (-1, 1.5) {};
	\end{pgfonlayer}
	\begin{pgfonlayer}{edgelayer}
		\draw [in=90, out=0, looseness=1] (1) to (2);
		\draw [in=0, out=-90, looseness=1] (2) to (0);
	\end{pgfonlayer}
\end{tikzpicture}
=
\begin{tikzpicture}[xscale=-1]
	\begin{pgfonlayer}{nodelayer}
		\node [style=rn] (0) at (-2, 1) {};
		\node [style=rn] (1) at (-2, 2) {};
		\node [style=zmul] (2) at (-1, 1.5) {};
	\end{pgfonlayer}
	\begin{pgfonlayer}{edgelayer}
		\draw [in=90, out=0, looseness=1] (1) to (2);
		\draw [in=0, out=-90, looseness=1] (2) to (0);
	\end{pgfonlayer}
\end{tikzpicture}
$

\item[{[S1]}]
\namedlabel{ZX.pi:S1}{\bf [S1]}
\hfil
$
\begin{tikzpicture}
	\begin{pgfonlayer}{nodelayer}
		\node [style=zmul] (0) at (4, -0) {$\pi$};
		\node [style=rn] (1) at (3, -0) {};
		\node [style=rn] (2) at (5, -0) {};
	\end{pgfonlayer}
	\begin{pgfonlayer}{edgelayer}
		\draw [style=simple] (2) to (0);
		\draw [style=simple] (0) to (1);
	\end{pgfonlayer}
\end{tikzpicture}
:=
\begin{tikzpicture}
	\begin{pgfonlayer}{nodelayer}
		\node [style=h] (0) at (4.5, 1) {};
		\node [style=rn] (1) at (3, -0) {};
		\node [style=rn] (2) at (6, -0) {};
		\node [style=zmul] (3) at (4.5, -0) {};
		\node [style=zmul] (4) at (3.75, -0.75) {};
		\node [style=xmul] (5) at (5.25, -0.75) {};
	\end{pgfonlayer}
	\begin{pgfonlayer}{edgelayer}
		\draw [style=simple, in=0, out=30, looseness=1.25] (3) to (0);
		\draw [style=simple, in=150, out=180, looseness=1.25] (0) to (3);
		\draw [style=simple] (2) to (3);
		\draw [style=simple] (3) to (1);
		\draw (5) to (4);
	\end{pgfonlayer}
\end{tikzpicture}
$

\item[{[S2]}]
\namedlabel{ZX.pi:S2}{\bf [S2]}
\hfil
$
\begin{tikzpicture}
	\begin{pgfonlayer}{nodelayer}
		\node [style=rn] (0) at (1, 2) {$\vdots$};
		\node [style=xmul] (1) at (0, 2) {$\alpha$};
		\node [style=rn] (2) at (-1, 2) {$\vdots$};
		\node [style=rn] (3) at (1, 3) {};
		\node [style=rn] (4) at (1, 1) {};
		\node [style=rn] (5) at (-1, 1) {};
		\node [style=rn] (6) at (-1, 3) {};
	\end{pgfonlayer}
	\begin{pgfonlayer}{edgelayer}
		\draw [style=simple, in=135, out=0, looseness=1.00] (6) to (1);
		\draw [style=simple, in=180, out=45, looseness=1.00] (1) to (3);
		\draw [style=simple, in=-45, out=180, looseness=1.00] (4) to (1);
		\draw [style=simple, in=0, out=-135, looseness=1.00] (1) to (5);
	\end{pgfonlayer}
\end{tikzpicture}
						:=
							\begin{tikzpicture}
	\begin{pgfonlayer}{nodelayer}
		\node [style=rn] (0) at (1.5, 2) {$\vdots$};
		\node [style=zmul] (1) at (0, 2) {$\alpha$};
		\node [style=rn] (2) at (2, 1) {};
		\node [style=rn] (3) at (2, 3) {};
		\node [style=rn] (4) at (-2, 1) {};
		\node [style=rn] (5) at (-2, 3) {};
		\node [style=rn] (6) at (-1.5, 2) {$\vdots$};
		\node [style=h] (7) at (1, 3) {};
		\node [style=h] (8) at (1, 1) {};
		\node [style=h] (9) at (-1, 1) {};
		\node [style=h] (10) at (-1, 3) {};
	\end{pgfonlayer}
	\begin{pgfonlayer}{edgelayer}
		\draw [style=simple] (3) to (7);
		\draw [style=simple] (10) to (5);
		\draw [style=simple] (4) to (9);
		\draw [style=simple] (2) to (8);
		\draw [style=simple, in=135, out=0, looseness=1.00] (10) to (1);
		\draw [style=simple, in=180, out=45, looseness=1.00] (1) to (7);
		\draw [style=simple, in=-45, out=180, looseness=1.00] (8) to (1);
		\draw [style=simple, in=0, out=-135, looseness=1.00] (1) to (9);
	\end{pgfonlayer}
\end{tikzpicture}
$

\item[{[S3]}]
\namedlabel{ZX.pi:S3}{\bf [S3]}
\hfil
$
\begin{tikzpicture}
	\begin{pgfonlayer}{nodelayer}
		\node [style=rn] (0) at (1, 2) {$\vdots$};
		\node [style=zmul] (1) at (0, 2) {$0$};
		\node [style=rn] (2) at (-1, 2) {$\vdots$};
		\node [style=rn] (3) at (1, 3) {};
		\node [style=rn] (4) at (1, 1) {};
		\node [style=rn] (5) at (-1, 1) {};
		\node [style=rn] (6) at (-1, 3) {};
	\end{pgfonlayer}
	\begin{pgfonlayer}{edgelayer}
		\draw [style=simple, in=135, out=0, looseness=1.00] (6) to (1);
		\draw [style=simple, in=180, out=45, looseness=1.00] (1) to (3);
		\draw [style=simple, in=-45, out=180, looseness=1.00] (4) to (1);
		\draw [style=simple, in=0, out=-135, looseness=1.00] (1) to (5);
	\end{pgfonlayer}
\end{tikzpicture}
						:=
							\begin{tikzpicture}
	\begin{pgfonlayer}{nodelayer}
		\node [style=rn] (0) at (1.5, 2) {$\vdots$};
		\node [style=zmul] (1) at (0, 2) {$$};
		\node [style=rn] (2) at (2, 1) {};
		\node [style=rn] (3) at (2, 3) {};
		\node [style=rn] (4) at (-2, 1) {};
		\node [style=rn] (5) at (-2, 3) {};
		\node [style=rn] (6) at (-1.5, 2) {$\vdots$};
		\node [style=rn] (7) at (1, 3) {};
		\node [style=rn] (8) at (1, 1) {};
		\node [style=rn] (9) at (-1, 1) {};
		\node [style=rn] (10) at (-1, 3) {};
	\end{pgfonlayer}
	\begin{pgfonlayer}{edgelayer}
		\draw [style=simple] (3) to (7);
		\draw [style=simple] (10) to (5);
		\draw [style=simple] (4) to (9);
		\draw [style=simple] (2) to (8);
		\draw [style=simple, in=135, out=0, looseness=1.00] (10) to (1);
		\draw [style=simple, in=180, out=45, looseness=1.00] (1) to (7);
		\draw [style=simple, in=-45, out=180, looseness=1.00] (8) to (1);
		\draw [style=simple, in=0, out=-135, looseness=1.00] (1) to (9);
	\end{pgfonlayer}
\end{tikzpicture}
$

					\end{description}
				\end{multicols}
				\
			\end{mdframed}
	}}
	\caption{The identities of $\ZX_\pi$ (where $\alpha \in \{0,\pi\}$)}
	\label{fig:ZXPI}
\end{figure}

The last 3 Axioms are actually definitions, which simplify the presentation of $\ZX_\pi$.
Note that the axioms of a classical structure are omitted from this box to save space.

These axioms imply that the black and white Frobenius algebras are complementary where the antipode is the identity.

This category has a canonical \dag-functor, as all of the stated axioms are horizontally symmetric. It is also $\dag$-compact closed.

This category  embeds $\FdHilb$; the black Frobenius algebra corresponds to the Pauli $Z$ basis; the  white Frobenius algebra corresponds to the Pauli $X$ basis; the gate $\hadamard$ corresponds to the Hadamard gate and $\zpi$ and $\xpi$ correspond to $Z$ and $X$ $\pi$-phase-shifts respectively.  In particular, the $X$ $\pi$-phase-shift is the not gate.
 \end{definition}
Because the $\pi$-phases are given by \ref{ZX.pi:S3}, by the commutative spider theorem, it is immediate that they are phase shifts:
\begin{description}

						\item[{[PH]}]
						\namedlabel{ZX.pi:PH}{\bf [PH]}
						\hfil
						$
\begin{tikzpicture}
	\begin{pgfonlayer}{nodelayer}
		\node [style=xmul] (0) at (8, -0) {$\pi$};
		\node [style=xmul] (1) at (9, -0) {};
		\node [style=none] (2) at (7, -0) {};
		\node [style=none] (3) at (10, 0.5) {};
		\node [style=none] (4) at (10, -0.5) {};
	\end{pgfonlayer}
	\begin{pgfonlayer}{edgelayer}
		\draw [style=simple, in=-27, out=180, looseness=1.00] (4.center) to (1);
		\draw [style=simple] (1) to (0);
		\draw [style=simple] (0) to (2.center);
		\draw [style=simple, in=180, out=27, looseness=1.00] (1) to (3.center);
	\end{pgfonlayer}
\end{tikzpicture}
=
\begin{tikzpicture}
	\begin{pgfonlayer}{nodelayer}
		\node [style=xmul] (0) at (9, 0) {};
		\node [style=none] (1) at (8, 0) {};
		\node [style=none] (2) at (10.25, -0.5) {};
		\node [style=none] (3) at (10.25, 0.5) {};
		\node [style=xmul] (4) at (10.25, -0.5) {$\pi$};
		\node [style=none] (5) at (11.25, -0.5) {};
		\node [style=none] (6) at (11.25, 0.5) {};
	\end{pgfonlayer}
	\begin{pgfonlayer}{edgelayer}
		\draw [style=simple, in=27, out=180, looseness=1.00] (3.center) to (0);
		\draw [style=simple, in=180, out=-27, looseness=1.00] (0) to (2.center);
		\draw [style=simple] (6.center) to (3.center);
		\draw [style=simple] (2.center) to (5.center);
		\draw [style=simple] (0) to (1.center);
	\end{pgfonlayer}
\end{tikzpicture}
						$
						
						\hfil
						$
\begin{tikzpicture}
	\begin{pgfonlayer}{nodelayer}
		\node [style=xmul] (0) at (9, -0) {$\pi$};
		\node [style=xmul] (1) at (8, -0) {};
		\node [style=none] (2) at (10, -0) {};
		\node [style=none] (3) at (7, 0.5) {};
		\node [style=none] (4) at (7, -0.5) {};
	\end{pgfonlayer}
	\begin{pgfonlayer}{edgelayer}
		\draw [style=simple, in=-153, out=0, looseness=1.00] (4.center) to (1);
		\draw [style=simple] (1) to (0);
		\draw [style=simple] (0) to (2.center);
		\draw [style=simple, in=0, out=153, looseness=1.00] (1) to (3.center);
	\end{pgfonlayer}
\end{tikzpicture}
=
\begin{tikzpicture}
	\begin{pgfonlayer}{nodelayer}
		\node [style=xmul] (0) at (10.25, 0) {};
		\node [style=none] (1) at (11.25, 0) {};
		\node [style=none] (2) at (9, -0.5) {};
		\node [style=none] (3) at (9, 0.5) {};
		\node [style=xmul] (4) at (9, -0.5) {$\pi$};
		\node [style=none] (5) at (8, -0.5) {};
		\node [style=none] (6) at (8, 0.5) {};
	\end{pgfonlayer}
	\begin{pgfonlayer}{edgelayer}
		\draw [style=simple, in=153, out=0, looseness=1.00] (3.center) to (0);
		\draw [style=simple, in=0, out=-153, looseness=1.00] (0) to (2.center);
		\draw [style=simple] (6.center) to (3.center);
		\draw [style=simple] (2.center) to (5.center);
		\draw [style=simple] (0) to (1.center);
	\end{pgfonlayer}
\end{tikzpicture}
						$
\end{description}

In bra-ket notation, a black spider from $n$ to $m$ with angle $\theta$ is interpreted as follows in $\FdHilb$:
$$|0\>^{ \otimes n}\< 0|^{ \otimes m}+e^{ i \theta}|1\>^{ \otimes n}\< 1|^{ \otimes m}$$
and a white spider from $n$ to $m$ with angle $\theta$ is interpreted as follows in $\FdHilb$:
$$|+\>^{ \otimes n}\< +|^{ \otimes m}+e^{ i \theta}|-\>^{ \otimes n}\< -|^{ \otimes m}$$


Note that the controlled-not gate has a succinct representation in $\ZX_\pi$ (this can be verified by calculation):

$$
\begin{tikzpicture}
	\begin{pgfonlayer}{nodelayer}
		\node [style=rn] (0) at (-2.25, 2) {};
		\node [style=rn] (1) at (-2.25, 3) {};
		\node [style=rn] (2) at (1.25, 3) {};
		\node [style=rn] (3) at (1.25, 2) {};
		\node [style=zmul] (4) at (-0.5, 3) {};
		\node [style=xmul] (5) at (-0.5, 2) {};
		\node [style=zmul] (6) at (-1.25, 4.25) {};
		\node [style=xmul] (7) at (0.25, 4.25) {};
	\end{pgfonlayer}
	\begin{pgfonlayer}{edgelayer}
		\draw [style=simple] (5) to (4);
		\draw [style=simple] (4) to (1);
		\draw [style=simple] (4) to (2);
		\draw [style=simple, in=0, out=180, looseness=1.00] (3) to (5);
		\draw [style=simple] (5) to (0);
		\draw [style=simple] (7) to (6);
	\end{pgfonlayer}
\end{tikzpicture}
$$

This means that $\ZX_\pi$ contains all of the generators of the real Clifford group. Furthermore, the following is known:
\begin{theorem}\cite{realzx} \label{thm:zxcomplete} 
$\ZX_\pi$ is complete for real stabilizer states.
\end{theorem}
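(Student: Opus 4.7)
The plan is to follow the completeness strategy for stabilizer ZX due to Backens (see \cite{backensthesis}), specialized to the real, angle-free fragment. First I would establish soundness: each axiom \ref{ZX.pi:PP}--\ref{ZX.pi:S3} must be verified to hold in $\FdHilb$ under the assignment sending the black spider to the Frobenius algebra for the $Z$ observable, the white spider to the one for the $X$ observable, and \hadamard\ to the Hadamard gate $H$. Since the only allowed phases are $0$ and $\pi$, this reduces to a finite list of matrix identities.

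Second, I would introduce a real analogue of the reduced graph-state-with-local-Clifford (rGS-LC) normal form. Any $\ZX_\pi$ diagram is first rewritten to a ``graph-like'' form by: applying spider fusion \ref{ZX.pi:S2} to merge adjacent same-coloured spiders, using the Hopf law (derivable from \ref{ZX.pi:IV} together with \ref{ZX.pi:B.U}) to eliminate parallel connections of opposite colour, and using \ref{ZX.pi:H2} to cancel consecutive Hadamards. The bialgebra \ref{ZX.pi:B.M} is then used to push white spiders past black ones, while the copy/$\pi$-commutation rules \ref{ZX.pi:PI} and \ref{ZX.pi:B.U} move $\pi$-phases and units toward the boundary. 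The outcome is a diagram consisting of a graph state $|G\rangle$ (all white spiders on the boundary connected through Hadamard edges into a black graph) post-composed with a boundary layer of ``real local Cliffords'' generated by $H$ and the $Z$-$\pi$ phase. Finally, using \ref{ZX.pi:B.H}, \ref{ZX.pi:B.M} and \ref{ZX.pi:S1}, I would show that the real analogues of the graph-theoretic \emph{local complementation} and \emph{pivoting} moves are derivable; because only $\{0,\pi\}$ phases are available, the usual complex stabilizer rewrites degenerate into their real-orthogonal subset.

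Third, I would prove that two rGS-LC diagrams in real normal form denoting the same linear map must be syntactically identical, adapting Backens' uniqueness argument by restricting the local-Clifford group to the real subgroup. Combined with the previous step this yields: if $\llbracket D_1\rrbracket=\llbracket D_2\rrbracket$ in $\FdHilb$, then both $D_1,D_2$ rewrite to a common normal form, hence $D_1=D_2$ in $\ZX_\pi$. The main obstacle is precisely this uniqueness step under the impoverished phase set: in the full stabilizer completeness proof the $\pm\pi/2$ phases at spiders are the engine of local complementation, whereas here one must simulate those moves purely through Hadamards, $\pi$-phases, and the bialgebra/colour-change laws. Verifying that \ref{ZX.pi:B.M}, \ref{ZX.pi:B.H}, \ref{ZX.pi:B.U} together with the classical-structure axioms and \ref{ZX.pi:H2} still suffice to equate any two real rGS-LC diagrams with equal denotation is the delicate combinatorial bookkeeping that carries the proof.
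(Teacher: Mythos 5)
The paper does not actually prove this statement: it is imported verbatim from \cite{realzx} (with the scalar refinements of \cite{scalars,removingstar}), so there is no in-paper proof to compare against. Your sketch is essentially a reconstruction of the strategy of those cited works --- soundness check, reduction to a graph-state-with-local-real-Cliffords normal form, and a uniqueness argument for that normal form --- and as a high-level outline it is the right one.

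Two points need correction, though. First, you claim that the real analogues of both local complementation \emph{and} pivoting are derivable. In the angle-free fragment local complementation is not available: the standard local-complementation rewrite is driven by $\pm\pi/2$ phases, which do not exist here, and it cannot be simulated by $\{0,\pi\}$ phases and Hadamards. The entire content of \cite{realzx} is that \emph{pivoting} (a composite of three local complementations, expressible using only $\pi$-phases and Hadamard edges) can replace local complementation as the engine of the normal-form uniqueness argument. Your proof must be organised around pivoting alone; a plan that leans on local complementation would fail at the uniqueness step. Second, the theorem as stated in this paper is the scalar-exact version, and the paper explicitly warns that the original result of \cite{realzx} is only completeness up to invertible scalars (and ignores the zero scalar). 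Your normal-form argument, as sketched, establishes only that weaker statement. To obtain the theorem as stated you must additionally track scalars throughout the rewriting, which is exactly what the axioms \ref{ZX.pi:IV} and \ref{ZX.pi:ZO} (added in \cite{scalars,removingstar}) are for; without invoking them your argument has a genuine gap relative to the claimed statement.
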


The original presentation of $\ZX_\pi$ in \cite{realzx} did not account for scalars; instead, it imposed the equivalence relation on circuits up to an invertible scalar and ignored the zero scalar entirely.  Therefore, the original completeness result described in \cite{realzx} is not actually as strong as Theorem \ref{thm:zxcomplete}.  This means, of course, that this original calculus does not embed in $\Mat_\C$ as the relations are not sound.   For example, the following map is interpreted as $\sqrt 2$, not $1$, in $\Mat_\C$:
$$
\begin{tikzpicture}
	\begin{pgfonlayer}{nodelayer}
		\node [style=xmul] (0) at (3, 2) {};
		\node [style=zmul] (1) at (4, 2) {};
	\end{pgfonlayer}
	\begin{pgfonlayer}{edgelayer}
		\draw [style=simple] (1) to (0);
	\end{pgfonlayer}
\end{tikzpicture}
$$

 Later on, \cite{scalars,removingstar}  showed that by scaling certain axioms to make them sound, and by adding Axioms \ref{ZX.pi:IV} and \ref{ZX.pi:ZO} this fragment of the ZX-calculus is also complete for scalars.   The properly scaled axioms have all been collected in Figure \ref{fig:ZXPI}.

\section{Embedding \texorpdfstring{$\CNOT$}{CNOT} into \texorpdfstring{$\ZX_\pi$}{the real stabilizer fragment of the ZX-calculus} }

Consider the interpretation of $\CNOT$ into $\ZX_\pi$, sending:

$$
\begin{tikzpicture}
	\begin{pgfonlayer}{nodelayer}
		\node [style=rn] (0) at (0, -0) {};
		\node [style=rn] (1) at (2, -0) {};
		\node [style=oplus] (2) at (1, -1) {};
		\node [style=dot] (3) at (1, -0) {};
		\node [style=rn] (4) at (2, -1) {};
		\node [style=rn] (5) at (0, -1) {};
	\end{pgfonlayer}
	\begin{pgfonlayer}{edgelayer}
		\draw [style=simple] (2) to (3);
		\draw [style=simple] (0) to (3);
		\draw [style=simple] (3) to (1);
		\draw [style=simple] (4) to (2);
		\draw [style=simple] (2) to (5);
	\end{pgfonlayer}
\end{tikzpicture}
\mapsto
\begin{tikzpicture}
	\begin{pgfonlayer}{nodelayer}
		\node [style=rn] (0) at (-2, 2) {};
		\node [style=rn] (1) at (-2, 3) {};
		\node [style=rn] (2) at (1, 3) {};
		\node [style=rn] (3) at (1, 2) {};
		\node [style=zmul] (4) at (-0.5, 3) {};
		\node [style=xmul] (5) at (-0.5, 2) {};
		\node [style=zmul] (6) at (-1.25, 4.25) {};
		\node [style=xmul] (7) at (0.25, 4.25) {};
	\end{pgfonlayer}
	\begin{pgfonlayer}{edgelayer}
		\draw [style=simple] (5) to (4);
		\draw [style=simple] (4) to (1);
		\draw [style=simple] (4) to (2);
		\draw [style=simple, in=0, out=180, looseness=1.00] (3) to (5);
		\draw [style=simple] (5) to (0);
		\draw [style=simple] (7) to (6);
	\end{pgfonlayer}
\end{tikzpicture}
\hspace*{.75cm}
\begin{tikzpicture}
	\begin{pgfonlayer}{nodelayer}
		\node [style=onein] (0) at (0, 0) {};
		\node [style=none] (1) at (1, 0) {};
	\end{pgfonlayer}
	\begin{pgfonlayer}{edgelayer}
		\draw (1.center) to (0);
	\end{pgfonlayer}
\end{tikzpicture}
\mapsto
\begin{tikzpicture}
	\begin{pgfonlayer}{nodelayer}
		\node [style=xmul] (0) at (1, -1) {$\pi$};
		\node [style=rn] (1) at (2, -1) {};
		\node [style=zmul] (2) at (1, -0) {};
		\node [style=xmul] (3) at (2, -0) {};
	\end{pgfonlayer}
	\begin{pgfonlayer}{edgelayer}
		\draw [style=simple] (1) to (0);
		\draw [bend left=45, looseness=1.25] (3) to (2);
		\draw [bend left=45, looseness=1.25] (2) to (3);
		\draw (3) to (2);
	\end{pgfonlayer}
\end{tikzpicture}
\hspace*{.75cm}
\begin{tikzpicture}[xscale=-1]
	\begin{pgfonlayer}{nodelayer}
		\node [style=onein] (0) at (0, 0) {};
		\node [style=none] (1) at (1, 0) {};
	\end{pgfonlayer}
	\begin{pgfonlayer}{edgelayer}
		\draw (1.center) to (0);
	\end{pgfonlayer}
\end{tikzpicture}
\mapsto
\begin{tikzpicture}
	\begin{pgfonlayer}{nodelayer}
		\node [style=xmul] (0) at (2, -1) {$\pi$};
		\node [style=rn] (1) at (1, -1) {};
		\node [style=zmul] (2) at (1, -0) {};
		\node [style=xmul] (3) at (2, -0) {};
	\end{pgfonlayer}
	\begin{pgfonlayer}{edgelayer}
		\draw [style=simple] (1) to (0);
		\draw [bend left=45, looseness=1.25] (3) to (2);
		\draw [bend left=45, looseness=1.25] (2) to (3);
		\draw (3) to (2);
	\end{pgfonlayer}
\end{tikzpicture}
$$



We explicitly prove that this interpretation is functorial.

\begin{lemma}\label{lem:cnotzxfunc}
The interpretation of $\CNOT$ into $\ZX_\pi$ is functorial.
\end{lemma}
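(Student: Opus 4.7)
The plan is to verify that the given interpretation preserves each of the nine defining relations \ref{CNOT.1}--\ref{CNOT.9} of Figure~\ref{fig:CNOT}. Since $\CNOT$ is presented as a PROP by these generators and relations, functoriality reduces to this collection of diagrammatic equalities in $\ZX_\pi$. For each axiom, I unfold the translation of every CNOT into a connected Z/X-spider pair together with its attached $\sqrt{2}$-scalar (the disjoint Z-X spider), and unfold the $\onein$ and $\oneout$ translations as $X$-$\pi$ phases likewise paired with $\sqrt{2}$-scalars; then I reduce both sides using the ZX rewrites.

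The tools needed are the spider fusion laws built into the classical structure on each colour, the bialgebra law \ref{ZX.pi:B.M}, the unit laws \ref{ZX.pi:B.U}, the $\pi$-copy law \ref{ZX.pi:PI}, the cancellation law \ref{ZX.pi:PP}, and the scalar axioms \ref{ZX.pi:IV}, \ref{ZX.pi:ZO}, and \ref{ZX.pi:L}. The involution \ref{CNOT.2} and the swap decomposition \ref{CNOT.1} reduce to Hopf/bialgebra manipulations using \ref{ZX.pi:B.M} and \ref{ZX.pi:B.U}. The commutation axioms \ref{CNOT.3} and \ref{CNOT.5} follow immediately from spider fusion once each CNOT is unfolded, since the two Z-spiders (respectively X-spiders) fuse and can be re-split. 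The copy interactions \ref{CNOT.4} follow from \ref{ZX.pi:PI} applied to the $X$-$\pi$ phase pushed through the Z-spider of the CNOT. The disjointness axiom \ref{CNOT.6} collapses two $X$-$\pi$ phases via \ref{ZX.pi:PP}, and the resulting scalar is simplified by \ref{ZX.pi:IV}.

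I expect the main obstacle to be axioms \ref{CNOT.7}--\ref{CNOT.9}, which govern the interaction of CNOTs with ancillary bits and, in the case of \ref{CNOT.9}, introduce an explicit $\sqrt{2}$ scalar that must arise naturally from the translation. The difficulty is bookkeeping rather than conceptual: one must carefully tally the $\sqrt{2}$-scalars contributed by each translated generator on both sides, track the loops created by bialgebra reductions via \ref{ZX.pi:B.M}, and match them using \ref{ZX.pi:IV}, \ref{ZX.pi:ZO}, and \ref{ZX.pi:L}. I would treat \ref{CNOT.9} last, using the scalar counts established for \ref{CNOT.7} and \ref{CNOT.8} as a consistency check that the $\sqrt{2}$-factors introduced by the interpretation are correctly absorbed on both sides.
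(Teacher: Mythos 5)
Your proposal takes essentially the same route as the paper: both proceed by verifying each of the nine $\CNOT$ axioms after unfolding the translations of the generators (and of the derived not gate and $|0\>$ ancillas), reducing with spider fusion, \ref{ZX.pi:B.M}, \ref{ZX.pi:B.U}, \ref{ZX.pi:PI}, \ref{ZX.pi:PP}, and the scalar rules, with the bookkeeping concentrated in \ref{CNOT.7}--\ref{CNOT.9} exactly as you anticipate. One small correction: the involution axiom \ref{CNOT.2} requires the Hopf law \ref{ZX.pi:B.H} after fusing the spiders (as you say in words, ``Hopf''), not merely \ref{ZX.pi:B.M} and \ref{ZX.pi:B.U} as cited.
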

\begin{proof}
See \ref{sec:embeddingcnotzx} \ref{lem:cnotzxfunc:proof}
\end{proof}

Because the standard interpretations of $\CNOT$ and $\ZX_\pi$ into $\Mat_\C$ commute and are faithful, the following diagram of strict \dag-symmetric monoidal functors makes $\CNOT\to\ZX_\pi$ faithful: 
$$
\xymatrix{
\CNOT \ar@{ >->}[dr] \ar[d]\\
\ZX_\pi \ar@{ >->}[r] & \Mat_\C
}
$$

\section{Extending \texorpdfstring{$\CNOT$}{CNOT} to \texorpdfstring{$\ZX_\pi$}{the real stabilizer fragment of the ZX-calculus}}

As opposed to the ZX-calculus,  the identities  of $\CNOT$ are given in terms of {\em circuit relations}.  When applying rules of the $\ZX$ calculus, circuits can be transformed into intermediary representations so that the flow of information is lost.  Various authors have found complete circuit relations for various fragments of quantum computing.  Notably, Selinger found a complete set of identities for Clifford circuits (stabilizer circuits without ancillary bits) \cite{selinger2015generators}.  Similarly, Amy et al. found a complete set of identities for cnot-dihedral circuits (without ancillary bits) \cite{1701.00140}.

In this section, we provide a complete set of circuit relations for {\em real} stabilizer circuits (although circuits can have norms greater than 1).
We show that $\CNOT$ is embedded in $\ZX_\pi$ and we complete $\CNOT$ to $\ZX_\pi$ by adding the Hadamard gate and the scalar $\sqrt 2$ as generators along with 5 relations.   

\begin{definition}
\label{def:cnoth}

Let $\CNOT+H$ denote the PROP freely generated by the axioms of $\CNOT$ with additional generators the Hadamard gate and $\sqrt{2}$:

$$

$

\end{enumerate}
\end{lemma}

\begin{proof}
See Lemma \ref{lem:cantthink:proof}
\end{proof}

This lemma makes it easier to show that $F$ and $G$ are functors:
\begin{lemma}
\label{lem:fisfunc}
$F:\CNOT+H \to \ZX_\pi$ is a strict \dag-symmetric monoidal functor.
\end{lemma}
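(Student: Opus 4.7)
The plan is to leverage Lemma \ref{lem:cnotzxfunc}, which already establishes functoriality on the $\CNOT$-subfragment, and extend it by checking that $F$ sends the five additional defining identities of Figure \ref{fig:CNOTH} to equalities holding in $\ZX_\pi$. Since the underlying PROP structure and the $\dag$ on the old generators are already preserved, and since $F$ sends the Hadamard to itself (which is self-dual in $\ZX_\pi$) and $\sqrt 2$ to the green-red cap-like scalar (which is its own $\dag$ up to the horizontal symmetry of Figure \ref{fig:ZXPI}), the $\dag$-symmetric-monoidal part is automatic once the equational part is checked.

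The easy checks come first. The image of \ref{cnoth:H.I} is two Hadamards in a row, which equals a bare wire by \ref{ZX.pi:H2}. The image of \ref{cnoth:H.S} is the red-green scalar composed with $\langle 0|H|0\rangle$; expanding $\zeroin,\zeroout$ via the classical structure and pulling the Hadamard through using the colour-change \ref{ZX.pi:S2}, this collapses to the empty diagram by \ref{ZX.pi:IV} together with the special/unital laws of the two Frobenius algebras. The image of \ref{cnoth:H.Z'} is the composite $\langle 0|\cdot \sqrt 2 \cdot |1\rangle$, which, after expressing $|1\rangle,\langle 0|$ in spider form and using the $\pi$-copy rule \ref{ZX.pi:PI}, reduces to an $X$-$\pi$ loop against a scalar; this is exactly the content of \ref{ZX.pi:ZO}.

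The main work is checking \ref{cnoth:H.F} and \ref{cnoth:H.L'}. For \ref{cnoth:H.F}, translate both sides: the image of the ZX controlled-not is the standard green-dot, red-dot pair connected by a wire, and conjugating each wire by a Hadamard lets one push all four Hadamards through the two spiders via \ref{ZX.pi:S2}, swapping colours and yielding precisely the reversed controlled-not. For \ref{cnoth:H.L'}, which encodes two CNOTs sandwiching a Hadamard with $|0\rangle$-initialised and $|0\rangle$-postselected ancilla, one translates both sides and uses bialgebra-style manipulations: the two CNOTs become four spiders, the $|0\rangle,\langle 0|$ become additional spiders by the derived encoding of $\zeroin,\zeroout$, and the \ref{ZX.pi:B.U} and \ref{ZX.pi:B.M} rules together with spider fusion \ref{ZX.pi:S2} collapse the diagram to a single CNOT composed with a Hadamard-dressed ancilla. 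Intermediate scalars appearing along the way are absorbed using Lemma \ref{lem:cantthink}(ii)--(iii) and the relations \ref{ZX.pi:IV}, \ref{ZX.pi:H2}.

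The principal obstacle will be \ref{cnoth:H.L'}: a direct translation produces a nine-node ZX-diagram, and controlling the bookkeeping of the scalar factors $\sqrt 2^{\pm 1}$ generated when introducing and removing spider loops requires careful application of Lemma \ref{lem:cantthink}. Once the scalar count is matched on both sides, the spider-graph reduction is standard. Everything else is a short diagrammatic verification, so no new $\ZX_\pi$ identity needs to be derived beyond those listed in Figure \ref{fig:ZXPI} and the colour-swapped variants furnished by the symmetry statement immediately following Definition \ref{def:ZX.pi}.
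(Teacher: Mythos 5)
Your overall strategy is the same as the paper's: invoke Lemma \ref{lem:cnotzxfunc} for the $\CNOT$ fragment, note that preservation of the $\dag$-symmetric monoidal structure is immediate, and verify the images of the remaining axioms of Figure \ref{fig:CNOTH} in $\ZX_\pi$; your treatments of \ref{cnoth:H.I}, \ref{cnoth:H.F} (Hadamard push-through via the colour change \ref{ZX.pi:S2}) and \ref{cnoth:H.S} match the paper's. Where you diverge is in \emph{which form} of the two remaining axioms you attack, and this inverts the difficulty profile. The paper does not verify \ref{cnoth:H.L'} by a nine-node spider reduction: it instead checks the restated form \ref{cnoth:H.L}, which under $F$ becomes (essentially) the definitional axiom \ref{ZX.pi:S1} of $\ZX_\pi$ and is therefore immediate; since the main text shows \ref{cnoth:H.L'} and \ref{cnoth:H.L} are interderivable modulo the other axioms already verified, this suffices. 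Your direct reduction of \ref{cnoth:H.L'} would presumably also work, but you are doing by hand what the restatement gives for free. Conversely, you dismiss \ref{cnoth:H.Z'} as ``exactly the content of \ref{ZX.pi:ZO}'', and this is the one place your sketch is genuinely under-justified: the paper's verification of the corresponding \ref{cnoth:H.Z} is the longest computation in the appendix, and it needs not only \ref{ZX.pi:ZO} but also \ref{ZX.pi:IV}, \ref{ZX.pi:PP}, \ref{ZX.pi:PH}, \ref{ZX.pi:B.U} and Lemma \ref{lem:dim} to track the loop scalars introduced by the encodings of $\onein$ and $\zeroout$ and to show that the zero scalar absorbs the extra $\sqrt2$. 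None of this is a wrong idea --- all the identities you would need are in your toolkit --- but as written the \ref{cnoth:H.Z'} step is an assertion rather than a proof, and you have located the ``main work'' in the wrong place.
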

\begin{proof}
See \ref{sec:completeness} Lemma \ref{lem:fisfunc:proof} 
\end{proof}

For the other way around:

\begin{lemma}
\label{lem:gisfunc}
$G:\ZX_\pi\to\CNOT+H $ is a strict \dag-symmetric monoidal functor.
\end{lemma}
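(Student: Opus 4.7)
The plan is to verify that $G$ respects every defining relation of $\ZX_\pi$. Since $G$ is specified on generators and extends strictly monoidally, functoriality reduces to checking that, for each axiom in Figure \ref{fig:ZXPI} together with the omitted classical-structure axioms and their color-swapped versions, the two sides have equal images in $\CNOT+H$.

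The first step is to establish that the images $G(\zmul)$, $G(\zmulu)$, $G(\zcomul)$, $G(\zcomulu)$ form a commutative $\dag$-classical structure in $\CNOT+H$. The multiplication and comultiplication come from the semi-Frobenius fanin/fanout structure already present in $\CNOT$, so the Frobenius law, commutativity, and specialness follow from the existing $\CNOT$ axioms. The genuinely new content is unitality, which is precisely the derived identity \ref{cnoth:H.U}; the dagger structure is visible from the horizontal symmetry of the $G$ definition. The color-swapped classical-structure axioms for the white spiders follow formally: in $\ZX_\pi$ the white spiders amount to the corresponding black spiders with each leg conjugated by $H$, and since $G(H) = H$ with $H \circ H = \mathrm{id}$ by \ref{cnoth:H.I}, every white-colored axiom reduces to its black counterpart.

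The remaining axioms I would verify individually. Axiom \ref{ZX.pi:H2} is exactly \ref{cnoth:H.I}. Axiom \ref{ZX.pi:PP} translates to $\notgate \circ \notgate = \mathrm{id}$, immediate from \ref{CNOT.2} and the definition of $\notgate$. Axiom \ref{ZX.pi:PI} expresses that $\notgate$ is copied by the fanout, which is \ref{CNOT.7}. The zero axiom \ref{ZX.pi:ZO} follows from \ref{cnoth:H.Z'} after tracking the $\sqrt{2}$ scalar via \ref{cnoth:H.S}. The involution axiom \ref{ZX.pi:IV} and the loop axiom \ref{ZX.pi:L} reduce to standard Frobenius-style manipulations of the fanout together with \ref{cnoth:H.I} and \ref{cnoth:H.U}. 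Finally, the scalar-normalization axioms \ref{ZX.pi:S1}, \ref{ZX.pi:S2}, and \ref{ZX.pi:S3} amount to counting $\sqrt{2}$ factors, with Lemma \ref{lem:cantthink} providing exactly the conversions needed.

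The main obstacle will be the bialgebra-type axioms \ref{ZX.pi:B.U}, \ref{ZX.pi:B.H}, and \ref{ZX.pi:B.M}, which couple the black and white copy structures nontrivially. For these I would unfold the white spiders as Hadamard-conjugated fanouts, apply \ref{cnoth:H.F} to rewrite each $H$-conjugated controlled-not as a color-swapped controlled-not, and then simplify using the CNOT-level identities, chiefly \ref{CNOT.3}, \ref{CNOT.5}, \ref{CNOT.7}, and \ref{CNOT.8}, along with the derived loop identity \ref{cnoth:H.L}. Careful bookkeeping of the $\sqrt{2}$ scalars, via \ref{cnoth:H.Z'} and \ref{cnoth:H.S}, is essential; Lemma \ref{lem:cantthink} collects exactly the scalar computations that make each of these reductions go through once the diagrams are correctly expanded.
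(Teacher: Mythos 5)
Your proposal is correct and follows essentially the same route as the paper: verify each axiom of $\ZX_\pi$ individually, obtain the classical structure from the semi-Frobenius structure of $\CNOT$ completed by \ref{cnoth:H.U} and \ref{cnoth:H.S}, handle colour-swapping by Hadamard conjugation via \ref{cnoth:H.I} and \ref{cnoth:H.F}, and track scalars with Lemma \ref{lem:cantthink}. The only caveat is that you underestimate \ref{ZX.pi:ZO}, which in the paper is the longest case and needs \ref{CNOT.9} and \ref{cnoth:H.L} to propagate the zero scalar along a wire, and you credit \ref{ZX.pi:PI} to \ref{CNOT.7} where the paper invokes naturality of $\Delta$ in $\CNOT$ --- but these are matters of bookkeeping, not of strategy.
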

\begin{proof}
See Lemma \ref{lem:gisfunc:proof}
\end{proof}

Next:

\begin{proposition}
\label{prop:fginv}
$\CNOT+H\xrightarrow{F} \ZX_\pi$ and $\ZX_\pi\xrightarrow{G} \CNOT+H$ are inverses.
\end{proposition}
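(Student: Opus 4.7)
The plan is to exploit the fact that both $F$ and $G$ are strict $\dag$-symmetric monoidal functors by Lemmas \ref{lem:fisfunc} and \ref{lem:gisfunc}, so the composites $GF$ and $FG$ are strict $\dag$-symmetric monoidal functors as well. Consequently, to show $GF = \mathrm{id}_{\CNOT+H}$ and $FG = \mathrm{id}_{\ZX_\pi}$, it suffices to verify each equation on the finite list of generators of each PROP. Once this is done, functoriality and strict symmetric monoidality extend the equality to all morphisms.

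First I would dispense with the easy generators. Since both $F$ and $G$ act as the identity on $\hadamard$, we immediately obtain $GF(\hadamard)=\hadamard$ and $FG(\hadamard)=\hadamard$. For the scalar, $F(\sqrt{2})$ is the disconnected $\zmul\!-\!\xmul$ diagram, and Lemma \ref{lem:cantthink}(i) gives $G$ of this equals $\sqrt{2}$; conversely, Lemma \ref{lem:cantthink}(iii) handles $FG$ on the $\zmul\!-\!\xmul$ scalar, and the Hadamard-sandwiched $\zeroin$--$\zeroout$ (i.e.\ $1/\sqrt{2}$) image under $FG$ is covered by the same lemma together with \ref{cnoth:H.S}.

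Next, I would check $GF$ on the remaining $\CNOT+H$ generators $\cnot$, $\onein$, and $\oneout$. In each case one applies $G$ to the explicit ZX-representation given in Section \ref{sec:cnot}: for $\cnot$ this produces a fan-out--fan-in pair composed with a disconnected $\sqrt{2}$ scalar (from $F$ of the $\zmul$ and $\xmul$ at the top), which simplifies via the derived Frobenius identities of $\CNOT+H$, the dot/plus compatibility axioms \ref{CNOT.1}--\ref{CNOT.3}, and \ref{cnoth:H.U} to absorb the surplus scalar. For $\onein$ and $\oneout$ the image under $F$ is $\xpi$ with a disconnected scalar; applying $G$ unpacks $\xpi = \hadamard \circ \zpi \circ \hadamard$ in $\CNOT+H$ and collapses the scalar via \ref{cnoth:H.S} and \ref{cnoth:H.Z}. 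Dually, for $FG$ I would check each $\ZX_\pi$ generator: for $\zcomul$ (i.e.\ the fan-out $G$-image), $F$ reconstructs the white spider using \ref{ZX.pi:S2} and \ref{ZX.pi:B.U}; for the unit $\zmulu$, whose $G$-image is $\hadamard\circ\zeroin$ tensored with the $\sqrt 2$ scalar, $F$ produces a diagram that reduces to $\zmulu$ via the colour-swapped \ref{ZX.pi:B.U} and Lemma \ref{lem:cantthink}(ii); the remaining $\zmul$ and $\zmulu^\dag$ cases are handled dually using the $\dag$-structure.

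The main obstacle will be the careful bookkeeping of scalar factors. Because the original $\ZX_\pi$ of \cite{realzx} identified circuits up to invertible scalars, each application of a bialgebra or fusion rule in $\ZX_\pi$ must be balanced by a $\sqrt 2$ (or $1/\sqrt 2$) on the $\CNOT+H$ side; Axioms \ref{ZX.pi:IV} and \ref{ZX.pi:ZO} on the ZX side and \ref{cnoth:H.S}, \ref{cnoth:H.Z}, \ref{cnoth:H.U} on the $\CNOT+H$ side are what make the reductions close on the nose rather than merely up to a scalar. Lemma \ref{lem:cantthink} is the key bridge ensuring that the two chosen scalar representatives are exchanged faithfully in both directions, so once all generator equalities are confirmed, strict monoidality completes the proof.
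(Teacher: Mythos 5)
Your proposal matches the paper's proof: both reduce the claim to checking $GF$ and $FG$ on generators (using that strict monoidal functors out of freely presented PROPs are determined by their values on generators), dispatch the Hadamard and scalar cases via Lemma \ref{lem:cantthink} and \ref{cnoth:H.S}, and carry out explicit diagrammatic reductions for $\cnot$ and $\onein$ using \ref{cnoth:H.F}, \ref{cnoth:H.U} and the $\CNOT$ axioms. The only difference is cosmetic: the paper dismisses the direction on the $\ZX_\pi$ generators as trivial where you sketch it explicitly.
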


\begin{proof}
See Proposition \ref{prop:fginv:proof}
\end{proof}

Because all of the axioms of $\CNOT+H$ and $\ZX_\pi$ satisfy the same ``horizontal symmetry''; we can not only conclude that they are isomorphic, but rather:

\begin{theorem}
$\CNOT+H$ and $\ZX_\pi$ are strictly $\dag$-symmetric monoidally isomorphic.
\end{theorem}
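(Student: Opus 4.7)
The plan is to assemble this theorem as a direct packaging of the three preceding results. Lemma~\ref{lem:fisfunc} establishes that $F \colon \CNOT+H \to \ZX_\pi$ is a strict $\dag$-symmetric monoidal functor; Lemma~\ref{lem:gisfunc} does the same for $G \colon \ZX_\pi \to \CNOT+H$; and Proposition~\ref{prop:fginv} verifies that $F \circ G = 1_{\ZX_\pi}$ and $G \circ F = 1_{\CNOT+H}$. Since two mutually inverse functors which both preserve the $\dag$, tensor, and symmetry structure strictly witness a strict $\dag$-symmetric monoidal isomorphism, the theorem follows by just citing these three results in sequence.

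The only conceptual content added by the theorem over the bare categorical isomorphism of Proposition~\ref{prop:fginv} is that $F$ and $G$ commute with the dagger. This is the point of the parenthetical remark about ``horizontal symmetry'': in both $\CNOT+H$ and $\ZX_\pi$, the $\dag$-functor is defined by reflecting a diagram horizontally and swapping inputs with outputs. This operation is a well-defined involutive contravariant functor precisely because every axiom in Figures~\ref{fig:CNOT}, \ref{fig:ZXPI} and \ref{fig:CNOTH} is itself stable under horizontal reflection (one checks this axiom by axiom; each is stated so that reflecting both sides yields another listed axiom, often a stated ``mirror'' version).

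The assignments defining $F$ and $G$ were also chosen in a horizontally symmetric way: each generator and its dagger are sent to diagrams that are horizontal reflections of one another — for instance, $F(\onein)$ and $F(\oneout)$ are mirror images, and the same is true of $G$ applied to $\zcomulu$ and $\zmulu$, etc. Hence $F$ and $G$ commute with the dagger on generators, and by functoriality on all morphisms. Combined with the strict symmetric monoidal structure (which is built into the PROP structure already preserved by $F$ and $G$), we obtain the strict $\dag$-symmetric monoidal isomorphism claimed.

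I do not anticipate any genuine obstacle here, as this is essentially a one-line corollary of the named lemmas. The one bookkeeping point which might merit a sentence of justification is that the horizontal-reflection dagger on each PROP is actually well-defined, which amounts to observing that the axiom sets are closed under the horizontal-reflection operation; this can be verified by inspection of each rule group. Once that is noted, the proof is simply: ``By Lemmas~\ref{lem:fisfunc} and~\ref{lem:gisfunc}, $F$ and $G$ are strict $\dag$-symmetric monoidal functors, and by Proposition~\ref{prop:fginv} they are mutually inverse; therefore they constitute a strict $\dag$-symmetric monoidal isomorphism between $\CNOT+H$ and $\ZX_\pi$.''
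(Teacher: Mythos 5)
Your proposal matches the paper's own (implicit) argument: the theorem is stated as an immediate consequence of Lemmas~\ref{lem:fisfunc} and~\ref{lem:gisfunc} together with Proposition~\ref{prop:fginv}, with the ``horizontal symmetry'' of the axioms cited exactly as you describe to justify that the isomorphism respects the dagger. Your additional bookkeeping remark about the reflection-dagger being well-defined is a fair elaboration of what the paper leaves to inspection, but it is the same route.
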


\section{Towards the Toffoli gate plus the Hadamard gate}
\label{section:tof}

Recall the PROP $\TOF$, generated by the 1 ancillary bits $\onein$ and $\oneout$ (depicted graphically as in $\CNOT$) as well as the Toffoli gate:   
  \[  \tof := 
		\begin{tikzpicture}
		\begin{pgfonlayer}{nodelayer}
		\node [style=nothing] (0) at (0, -0) {};
		\node [style=nothing] (1) at (0, 0.5000001) {};
		\node [style=nothing] (2) at (0, 1) {};
		\node [style=nothing] (3) at (2, 1) {};
		\node [style=nothing] (4) at (2, 0.5000001) {};
		\node [style=nothing] (5) at (2, -0) {};
		\node [style=dot] (6) at (1, 1) {};
		\node [style=dot] (7) at (1, 0.5000001) {};
		\node [style=oplus] (8) at (1, -0) {};
		\end{pgfonlayer}
		\begin{pgfonlayer}{edgelayer}
		\draw (2) to (6);
		\draw (6) to (3);
		\draw (4) to (7);
		\draw (7) to (1);
		\draw (0) to (8);
		\draw (8) to (5);
		\draw (8) to (7);
		\draw (6) to (7);
		\end{pgfonlayer}
		\end{tikzpicture}
		\]

The axioms are given Figure \ref{fig:TOF}, which we have put in the Appendix \ref{sec:tof}.
Recall that have that:

\begin{theorem}
$\TOF$ is discrete-inverse equivalent to the category of partial isomorphisms between finite powers of the two element set, and thus, is complete.
\end{theorem}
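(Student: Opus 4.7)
The plan is to exhibit a pair of functors establishing the equivalence between $\TOF$ and $\PIsom$, the category of partial isomorphisms between finite powers of $\{0,1\}$, and then check that they are quasi-inverses that preserve the discrete-inverse structure. In one direction I would define the obvious interpretation $\llbracket-\rrbracket:\TOF\to\PIsom$ sending $\tof$ to the standard Toffoli involution on $\{0,1\}^3$ (flip the third bit iff both controls are $1$) and the $\onein$, $\oneout$ ancillae to the partial isomorphisms between $\{0,1\}^0$ and the singleton $\{1\}\subseteq\{0,1\}$. Soundness amounts to directly checking each axiom in Figure \ref{fig:TOF} under this interpretation, which is a routine finite verification.

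Fullness should follow from the well-known synthesis result that Toffoli gates (with a \textsf{not} gate derivable by applying $\tof$ to two ancillary $\onein$ inputs, mirroring the derivation of \textsf{not} in $\CNOT$) generate every reversible permutation of $\{0,1\}^n$ on sufficiently many qubits. Any partial isomorphism $\phi:S\to T$ with $S\subseteq\{0,1\}^n$, $T\subseteq\{0,1\}^m$ is then realizable by pre-composing $\onein$ ancillae to enlarge the ambient space, implementing a total permutation that sends $S\times\{1^k\}$ bijectively onto $T\times\{1^k\}$, and post-composing $\oneout$ measurements on the extra wires.

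Faithfulness is where the work lies and will be the main obstacle. The standard route is a normal form theorem: use the circuit identities to show every morphism is provably equal to one of the shape ``batch of $\onein$ preparations, followed by a purely Toffoli-gate reversible circuit, followed by a batch of $\oneout$ measurements''. Pushing all preparations left and all measurements right uses the commutation axioms analogous to \ref{CNOT.7} and \ref{CNOT.8}; the reversible core is then further normalized to a canonical representative of its permutation class using a presentation of the Toffoli-generated subgroups of the symmetric groups on $\{0,1\}^n$. The hard combinatorial step is exhibiting a terminating, confluent rewriting (for example, via a Lafont-style analysis with a lexicographic well-founded measure on the number of ancillary pairs plus the length and structure of the Toffoli word), and showing that the normal form is uniquely determined by the underlying partial isomorphism. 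Some care is needed around ancillae that are ``fresh-then-discarded'' inside the reversible core: axioms analogous to \ref{CNOT.9} must be used to collapse these into commuted pairs at the boundary.

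Once faithfulness is in hand, the equivalence-of-categories statement is immediate, and the refinement to a \emph{discrete-inverse} equivalence comes for free: the dagger on $\TOF$ is given by the horizontal symmetry visible in the axiomatization (every axiom is preserved by reflection), the dagger on $\PIsom$ is the usual converse, and the semi-Frobenius ``fan-in/fan-out'' structure that produces the discrete (diagonal) maps is preserved on generators, so preserved throughout. Completeness then follows from the equivalence, since two $\TOF$-circuits denote the same partial isomorphism iff they are provably equal from the axioms of Figure \ref{fig:TOF}.
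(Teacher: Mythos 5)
This theorem is not proved in the paper at all: it is recalled verbatim from \cite{tof}, where the completeness of the presentation in Figure \ref{fig:TOF} is established. So there is no in-paper argument to compare yours against; what follows assesses your outline against what such a proof actually requires.

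Your plan has the right overall shape (interpretation functor, soundness by finite checking, fullness via reversible synthesis with ancill\ae, faithfulness via normal forms), but as written it contains a genuine gap precisely where you locate ``the main obstacle.'' The faithfulness step is the entire mathematical content of the theorem, and your proposal only names the strategy --- a terminating confluent rewriting system and a canonical representative for each Toffoli word --- without producing the measure, the critical-pair analysis, or the uniqueness argument. This is not a routine omission: the reversible core is \emph{not} handled by a presentation of the symmetric groups $S_{2^n}$ alone, since for $n\geq 4$ the ancilla-free Toffoli circuits generate only the alternating group, and the interesting identities (e.g.\ \ref{TOF.6}, \ref{TOF.16}) govern exactly the interaction between the reversible part and temporarily allocated ancillary bits, which cannot all be commuted to the boundary. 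The actual proof in \cite{tof} leans on the inverse-category structure: one first shows $\TOF$ is a discrete inverse category (this is where the semi-Frobenius ``fan-out'' and axioms like \ref{TOF.15} do real work), builds the generalized controlled-not gates $\cnot_n$ of Definition \ref{definition:Generalizedcnot} together with their zipping/permutation lemmas, and reduces faithfulness to a completeness statement about these gates and the idempotents they generate. Your remark that the discrete-inverse refinement ``comes for free'' also undersells it: one must verify that the candidate inverse products satisfy the coherence required of a discrete inverse category before the equivalence with partial isomorphisms (rather than mere injections or relations) can be asserted. In short, the skeleton is sound but the proposal defers, rather than supplies, the proof.
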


By \cite{aaronson}, we have that the Toffoli gate is universal for classical reversible computing, therefore $\TOF$ is a complete set of identities for the universal fragment of classical computing.
However, the category is clearly is not universal for quantum computing.  Surprisingly, by adding the Hadamard gate as a generator, this yields a category which is universal for an approximately universal fragment of quantum computing \cite{tofh}.

Thus, one would hope that the completeness of $\CNOT+H$ could be used to give a complete set of identities for a category $\TOF+H$.

Although we have not found such a complete set of identities,  the identity \ref{cnoth:H.F} can be easily extended to an identity that characterizes the commutativity of a multiply controlled-$Z$-gate.  This could possibly facilitate a two way translation to and from the ZH calculus \cite{zh}, like we performed between $\CNOT+H$ and $\ZX_\pi$.  This, foreseeably would be much easier than a translation between one of the universal fragments of the ZX-calculus; because, despite the recent simplifications of the Toffoli gate in terms of the triangle, the triangle itself does not have a simple representation in terms of the Toffoli gate, Hadamard gate and computational ancillary bits \cite{vilmart}.

If we conjugate the not gate ($X$ gate) with Hadamard gates, we get the $Z$ gate:

$$

$$

\section*{Acknowledgement}
The author would like to thank Robin Cockett, Jean-Simon Lemay and John van de Wetering for useful discussions.

  \nocite{duncaninteracting}
  \nocite{bonchiinteracting}
  \nocite{gottesmanstabilizer}
  
  \bibliography{cnoth}
  \bibliographystyle{eptcs}

\appendix

\section{Embedding \texorpdfstring{$\CNOT$}{CNOT} into \texorpdfstring{$\ZX_\pi$}{the real stabilizer fragment of the ZX-calculus} (proofs)}
\label{sec:embeddingcnotzx} 

The following basic identity is needed:

\begin{lemma} \cite[Lemma 19]{vilmart}
\label{lem:dim}
\label{lem:dim:proof}
 & \ref{cnoth:H.S}\\
\end{align*}
\end{enumerate}
\end{proof}

\subsection{Proof of Lemma \ref{lem:fisfunc}}

\label{lem:fisfunc:proof}

We show that these interpretations are functors:
\begin{customlemma}{\ref{lem:fisfunc}}

$F:\CNOT+H \to \ZX_\pi$ is a strict \dag-symmetric monoidal functor.
\end{customlemma}
\begin{proof}
The preservation of the \dag-symmetric monoidal structure is immediate.
As the restriction of $F$ to $\CNOT$ is a functor, it suffices to show that \ref{cnoth:H.I}, \ref{cnoth:H.F}, \ref{cnoth:H.U},  \ref{cnoth:H.L}, \ref{cnoth:H.S} and \ref{cnoth:H.Z} hold.

\begin{description}
\item[\ref{cnoth:H.I}] Immediate.
\item[\ref{cnoth:H.F}]
\begin{align*}
 & \text{Lemma \ref{lem:cnottozx} (ii)}
\end{align*}
\endgroup

\end{description}

\end{proof}

\subsection{Proof of Lemma \ref{lem:gisfunc} }
\label{lem:gisfunc:proof}

\begin{customlemma}{\ref{lem:gisfunc} }\
$G:\ZX_\pi\to\CNOT+H $ is a strict \dag-symmetric monoidal functor.
\end{customlemma}

\begin{proof}

We prove that each axiom holds:
\begin{description}

\item[\ref{ZX.pi:PI}]
  This follows by naturality of $\Delta$ in $\CNOT$.

\item[\ref{ZX.pi:B.U}]
  This follows by naturality of $\Delta$ in $\CNOT$ and \ref{cnoth:H.S}.

\item[\ref{ZX.pi:H2}]
  This follows immediately from \ref{cnoth:H.I}.
  
  \item[\ref{ZX.pi:H2}]
  This follows immediately from \ref{cnoth:H.S}.

\item[\ref{ZX.pi:PP}]
\begin{align*}

\end{align*}
\endgroup

\item[Classical structure: ]  Remark that  rules  \ref{cnoth:H.U} and  \ref{cnoth:H.S} complete the semi-Frobenius structure to the appropriate classical structure.
\end{description}
\end{proof}

\subsection{Proof of Proposition \ref{prop:fginv}}
\label{prop:fginv:proof}

\begin{customproposition}{\ref{prop:fginv}}\
$\CNOT+H\xrightarrow{F} \ZX_\pi$ and $\ZX_\pi\xrightarrow{G} \CNOT+H$ are inverses.
\end{customproposition}

\begin{proof}\

\begin{enumerate}
\item First, we show that $G;F=1$ .
  
  We only prove the cases for the generators $\cnot$ and $\onein$ as the claim follows trivially for the Hadamard gate and by symmetry for $\oneout$:
  
\begin{description}
\item[For $\onein$:]
\begin{align*}

$}
\end{enumerate}
\end{multicols}
\
\end{mdframed}
}}
\caption{The identities of $\TOF$}
\label{fig:TOF}
\end{figure}

\section*{Errata}
The author would like to apologize for giving an incorrect proof of Corollary 4.9 in the previous preprint version; meaning that the scalar $\sqrt 2$ cannot be removed.

\end{document}